\documentclass[draftclsnofoot,onecolumn,12pt]{IEEEtran}

\usepackage{amsmath,amsthm,amsfonts,amssymb,dsfont,mathtools,bm}
\usepackage{graphicx}
\usepackage{xcolor}
\usepackage{relsize}
\usepackage{algorithmicx}
\usepackage{algorithm}
\usepackage{float}
\usepackage{cleveref}
\usepackage{cite}
\usepackage{cancel}

\graphicspath{{figures/}}

\theoremstyle{plain}
\newtheorem{lem}{Lemma}
\newtheorem{theorem}{Theorem}
\newtheorem{coro}{Corollary}

\theoremstyle{definition}

\theoremstyle{remark}
\newtheorem{remark}{Remark}

\newcommand{\e}{{\rm e}}
\newcommand{\D}{\mathrm{d}}

\newcommand\bigexists{%
	\mathop{\lower0.75ex\hbox{\ensuremath{%
				\mathlarger{\mathlarger{\exists}}}}}%
	\limits}

\title{List Estimation}

\author{Nikola Zlatanov, Amin  Gohari, Farzad Shahrivari, and Mikhail Rudakov
\thanks{ N. Zlatanov and M. Rudakov are with the Faculty of Computer Science and Software Engineering, Innopolis University, Innopolis, Russia. E-mail: n.zlatanov@innopolis.ru}
\thanks{ A. Gohari is with the Department of Information Engineering, The Chinese University of Hong Kong, Hong Kong, China.}
 }

\begin{document}

\maketitle

\begin{abstract}
Classical estimation outputs a single point estimate of an unknown $d$-dimensional vector from an observation.
 In this paper, we study \emph{$k$-list estimation}, in which a single observation is used to produce a list of $k$ candidate estimates and performance is measured by the expected squared distance from the true vector to the closest candidate. We compare this centralized setting with a symmetric decentralized MMSE benchmark in which $k$ agents observe conditionally i.i.d.\ measurements and each agent outputs its own MMSE estimate. On the centralized side, we show that optimal $k$-list estimation is equivalent to fixed-rate $k$-point vector quantization of the posterior distribution and, under standard regularity conditions, admits an exact high-rate asymptotic expansion with explicit constants and decay rate $k^{-2/d}$. On the decentralized side, we derive lower bounds in terms of the small-ball behavior of the single-agent MMSE error; in particular, when the conditional error density is bounded near the origin, the benchmark distortion cannot decay faster than order $k^{-2/d}$. We further show that if the error density vanishes at the origin, then the decentralized benchmark is provably unable to match the centralized $k^{-2/d}$ exponent, whereas the centralized estimator retains that scaling. Gaussian specializations yield explicit formulas  and numerical experiments corroborate the predicted asymptotic behavior. Overall, the results show that, in the scaling with $k$, one observation combined with $k$ carefully chosen candidates can be asymptotically as effective as---and in some regimes strictly better than---this MMSE-based decentralized benchmark with $k$ independent observations.
\end{abstract}

\begin{IEEEkeywords}
List estimation, vector quantization, mean-squared error, decentralized estimation, small-ball probability, Gaussian models.
\end{IEEEkeywords}

\section{Introduction}\label{Sec1}

\IEEEPARstart{C}{lassical} estimation takes one observation and returns one estimate. Under squared-error loss, the canonical choice is the minimum mean-squared error (MMSE) estimator, given by the conditional expectation $\mathbb{E}[\bm X|\bm Y]$ \cite{RePEc:eee:monogr:9780122951831}. In many systems, however, a single estimate is not the most useful output. A downstream stage may be able to test, refine, or select among several plausible candidates generated from the same measurement. Motivated by this setting, we study \emph{$k$-list estimation}: from a single observation $\bm Y$, an estimator produces a list of $k$ candidates $\widehat{\bm X}_1,\ldots,\widehat{\bm X}_k$, and performance is judged by how close the nearest candidate is to the unknown target vector $\bm X\in\mathbb{R}^d$.

This formulation leads to a natural question: how does one observation plus $k$ carefully designed candidates compare with $k$ independent observations processed separately by $k$ agents? To make the comparison concrete, we study a centralized list estimator and a symmetric decentralized MMSE benchmark in which agent $i$ observes $\bm Y_i$ and outputs $\mathbb{E}[\bm X|\bm Y_i]$. This benchmark is natural and practically important\footnote{We do not claim this benchmark to be the globally optimal decentralized architecture for the min-of-$k$ objective.}. The comparison is deliberately conservative with respect to the centralized side: the list estimator receives only a \emph{single} observation, whereas the decentralized benchmark has access to $k$ independent observations.

List estimation is relevant whenever acquiring a new observation is expensive, while generating and testing multiple candidates from an existing observation is comparatively cheap. Examples include multi-robot target search, where detection time is dominated by the robot initialized closest to the target; parallel experimental design and hyperparameter search, where success is achieved if at least one configuration is close to optimal; beam sweeping in millimeter-wave systems, where performance depends on the beam whose direction is closest to the channel direction; and resource-constrained medical imaging, where the same low-cost measurements may be used to generate several plausible reconstructions before deciding whether higher-fidelity follow-up is needed. In all of these examples, system performance is governed by the \emph{best} candidate among several parallel options.

The main message of the paper is that, under broad regularity conditions, the best-candidate distortion from one observation scales surprisingly well with the list size. In particular, the centralized list estimator achieves   $k^{-2/d}$ high-rate scaling, while the decentralized MMSE benchmark cannot improve upon that exponent in smooth models and is provably unable to match it in regimes where very small MMSE errors are sufficiently rare.

The novelty of the paper lies in formulating the min-of-$k$ estimation criterion induced by side information, connecting it to posterior vector quantization, and developing coordinate-free lower bounds for a natural decentralized MMSE benchmark via small-ball probabilities of the MMSE error.

\subsection{Choice of performance metric}\label{subsec:metric}

We study the distortion
\begin{equation*}
D(k)=\mathbb{E}\!\left[\,\min_{1\le i\le k}\big\|\bm X-\widehat{\bm X}_i\big\|_2^2\,\right].
\end{equation*}
This metric is appropriate when a downstream stage can exploit the list, for example by refining several candidates in parallel, testing them locally, or using a short feedback signal to identify the most promising one. In that sense, the list is an intermediate representation rather than necessarily the final decision.

This viewpoint is analogous to list decoding, where an inner decoder outputs several candidates and a later stage resolves the ambiguity. We do not model that later stage explicitly. Instead, we use the minimum-over-$k$ squared error as an analytically tractable proxy for systems that can capitalize on the best candidate. When no such downstream mechanism exists, this metric is more optimistic than conventional single-estimate MSE, and our claims should be interpreted accordingly.

\subsection{Contributions}

Our contributions are as follows.

\noindent\textbf{(1) Centralized formulation and high-rate asymptotics:}
We formalize $k$-list estimation from a single observation and show that, for each realized observation, the optimal list is the optimal $k$-point vector quantizer of the posterior distribution. Under standard smoothness and moment conditions, the resulting distortion admits an exact high-rate asymptotic expansion of order $k^{-2/d}$ with explicit constants.

\noindent\textbf{(2) Lower bounds for a decentralized MMSE benchmark:}
We study a symmetric benchmark in which $k$ agents observe conditionally i.i.d.\ measurements and each agent applies the single-estimate MMSE rule. For this benchmark we derive lower bounds in terms of the small-ball probability of the single-agent MMSE error near the origin. The bounds require only conditional i.i.d.\ observations across agents and do not rely on coordinate-wise independence assumptions.

\noindent\textbf{(3) Exponent comparison via local error geometry:}
We compare the scaling of the centralized and decentralized distortions according to the local behavior of the conditional joint density of the MMSE error vector near the origin. If the conditional error density is bounded near zero, the decentralized benchmark cannot decay faster than $k^{-2/d}$, matching the centralized exponent. If the density vanishes at the origin, the benchmark is provably unable to match the centralized exponent, whereas the centralized estimator retains the $k^{-2/d}$ scaling.

\noindent\textbf{(4) Gaussian specialization and numerical validation:}
For Gaussian models we derive explicit centralized high-rate formulas,  verify the smooth-error regime for the decentralized benchmark, and use simulations to confirm the predicted asymptotic behavior.

Taken together, these results show that one observation plus a well-designed list of $k$ candidates can match the $k$-exponent of this MMSE-based decentralized benchmark with $k$ observations, and can outperform it in some regimes.

\subsection{Related works}

The centralized part of our analysis is closest in technique to fixed-rate vector quantization and high-rate distortion theory, with broader connections to rate-distortion and lossy compression \cite{Abut:1990:VQ:533930,116036,720541,Cover:2006:EIT:1146355,berger2003rate,berger1975rate,blahut1972computation,ahmed1974discrete}. These tools underpin our posterior-quantization characterization. However, they do not study the present estimation problem in which side information induces a posterior distribution and performance is measured by a min-of-$k$ squared error. To the best of our knowledge, we are not aware of prior work that formulates this criterion and compares it with the decentralized MMSE benchmark considered here.

Conceptually related list-based reconstruction ideas appear in information-theoretic security. The authors of \cite{7454714} study an eavesdropper that produces a list of reconstruction sequences and measure secrecy through the minimum distortion over the list. Related list-reconstruction formulations for Shannon ciphers are considered in \cite{7805336,7855830}. These works focus on secrecy metrics and coding constructions, whereas our focus is estimation-theoretic scaling and comparison with decentralized MMSE estimation.

On the application side, \cite{PhysRevE.106.024101} investigates target search with multiple random walkers whose initial positions depend on side information about the target location. Our setting differs in both objective and analysis: we study estimation under a min-of-$k$ squared-error criterion and derive asymptotic scaling laws. Another conceptual relative is the information-theoretic guessing literature \cite{Massey1994Guessing,Arikan1996Guessing}, where one sequentially guesses the value of a random variable---possibly with side information---until success. Guessing includes explicit feedback after each trial, unlike our setting, but both problems emphasize the value of multiple attempts and the role of rare near-success events; see also \cite{Resnick1987EVT} for background on extreme-value phenomena.

\subsection{Notation}

Vectors are denoted by boldface letters; random variables are denoted by capital letters, and realizations by lowercase letters. Thus, boldface capital letters denote random vectors. We use $\mathbb{P}[\cdot]$ and $\mathbb{E}[\cdot]$ for probability and expectation, respectively. The PDFs of $\bm X$, $\bm X|\bm Y$, and $(\bm X,\bm Y)$ are denoted by $f_{\bm X}(\bm x)$, $f_{\bm X|\bm Y}(\bm x|\bm y)$, and $f_{\bm X,\bm Y}(\bm x,\bm y)$, respectively. We write $X\sim\mathcal{N}(\mu,\sigma^2)$ for a Gaussian random variable with mean $\mu$ and variance $\sigma^2$, and $\|\cdot\|_r$ for the $\ell_r$-norm.

We use the standard asymptotic notations $O(\cdot)$, $o(\cdot)$, $\Theta(\cdot)$, and $\Omega(\cdot)$ with respect to $k\to\infty$ unless stated otherwise. In particular, $f(k)=\Theta(g(k))$ means that $f(k)$ is bounded above and below by positive constant multiples of $g(k)$ for all sufficiently large $k$, while $f(k)=\Omega(g(k))$ denotes a corresponding lower bound. The parameters $d$ and $k$ are reserved for the ambient dimension and the list size, respectively. We denote the volume of the $d$-dimensional Euclidean unit ball by
\begin{equation}\label{eq:Vd_def_notation}
V_d \triangleq \mathrm{Vol}\big(\{\bm z:\|\bm z\|_2\le 1\}\big)=\frac{\pi^{d/2}}{\Gamma\!\left(\frac d2+1\right)},
\end{equation}
and the corresponding surface area by $S_d=dV_d$. Finally, $[1:k]$ denotes the set of integers $\{1,2,\ldots,k\}$.

\subsection{Paper structure}

The rest of the paper is organized as follows. Section~\ref{sec:Modeling} formulates the centralized $k$-list estimation problem and the decentralized MMSE benchmark. Section~\ref{sec:main_results} presents the main asymptotic results in general dimension $d$ and compares the corresponding distortion exponents. Section~\ref{sec:gaussian_examples} specializes the analysis to Gaussian models and computes explicit constants. Section~\ref{sec:simres} reports numerical experiments. Section~\ref{sec_conc} concludes the paper.


\section{Problem Formulation and Distortion Definitions}\label{sec:Modeling}

We now formalize the two estimation architectures compared throughout the paper. Let $\bm X\in\mathbb{R}^d$ denote the unknown vector of interest, and let $\bm Y\in\mathbb{R}^m$ be an observation generated according to a conditional PDF $f_{\bm Y|\bm X}(\bm y|\bm x)$, where the observation dimension $m$ need not equal $d$. The centralized architecture uses a \emph{single} observation $\bm Y$ to jointly produce a list of $k$ candidate estimates. The decentralized benchmark uses $k$ conditionally i.i.d.\ observations $\bm Y_1,\ldots,\bm Y_k$, each processed separately by one agent. In both cases, performance is measured by the \emph{best-candidate distortion}
\[
\mathbb{E}\!\left[\min_{1\le i\le k}\big\|\bm X-\widehat{\bm X}_i\big\|_2^2\right].
\]

\subsection{Centralized $k$-List Estimation}

A centralized $k$-list estimator consists of measurable mappings
$g_1,\ldots,g_k:\mathbb{R}^m\to\mathbb{R}^d$.
From one realization of $\bm Y$, it outputs the jointly designed list
\[
\widehat{\bm X}_i = g_i(\bm Y),\qquad i\in[1:k].
\]
The corresponding optimal best-candidate distortion is
\begin{equation}
\label{eq:D1_def}
D_1(k)\triangleq
\inf_{g_1(\cdot),\ldots,g_k(\cdot)}
\mathbb{E}\!\left[\min_{1\le i\le k}\big\|\bm X-g_i(\bm Y)\big\|_2^2\right].
\end{equation}

Conditioned on a realized observation $\bm Y=\bm y$, the design problem reduces to choosing $k$ deterministic representatives in $\mathbb{R}^d$ for the posterior distribution of $\bm X$. This yields the pointwise representation of the distortion as
\begin{align}
D_1(k)
&=\mathbb{E}_{\bm Y}\Bigg[
\inf_{\widehat{\bm x}_1,\ldots,\widehat{\bm x}_k\in\mathbb{R}^d}
\mathbb{E}\!\left[\min_{1\le i\le k}\big\|\bm X-\widehat{\bm x}_i\big\|_2^2\,\Big|\,\bm Y\right]
\Bigg]\label{eq:D1_pointwise}\\
&=\mathbb{E}_{\bm Y}\Bigg[
\inf_{\widehat{\bm x}_1,\ldots,\widehat{\bm x}_k\in\mathbb{R}^d}
\int_{\mathbb{R}^d}\min_{1\le i\le k}\big\|\bm x-\widehat{\bm x}_i\big\|_2^2\,
f_{\bm X|\bm Y}(\bm x|\bm Y)\,\D\bm x
\Bigg],\label{eq:D1_pointwise_integral}
\end{align}
where \eqref{eq:D1_pointwise_integral} assumes that the conditional distribution of $\bm X|\bm Y$ admits a density.

For each realized $\bm y$, the inner optimization in \eqref{eq:D1_pointwise_integral} is exactly the optimal $k$-point vector quantization problem for the posterior density $f_{\bm X|\bm Y}(\cdot|\bm y)$ under squared-error distortion. Hence $D_1(k)$ is the average optimal posterior quantization error.

\subsection{Decentralized MMSE Benchmark}

In the decentralized benchmark, $k$ agents observe $\bm Y_1,\ldots,\bm Y_k\in\mathbb{R}^m$, where agent $i$ observes only $\bm Y_i$, $i\in[1:k]$. Given $\bm X$, these observations are assumed to be conditionally i.i.d.\ with the same kernel $f_{\bm Y|\bm X}$. Equivalently,
\begin{equation}\label{eq:agent_cond_iid_factorization}
f_{\bm X,\bm Y_1,\ldots,\bm Y_k}(\bm x,\bm y_1,\ldots,\bm y_k)
=
f_{\bm X}(\bm x)\prod_{i=1}^k f_{\bm Y|\bm X}(\bm y_i|\bm x).
\end{equation}

Let
\begin{equation}\label{eq:g}
g(\bm y)\triangleq \mathbb{E}\big[\bm X\big|\bm Y=\bm y\big]
\end{equation}
denote the single-agent MMSE estimator. Each agent applies the same rule to its own observation and outputs
\[
\widehat{\bm X}_i=g(\bm Y_i),\qquad i\in[1:k].
\]
The resulting best-candidate distortion is
\begin{equation}\label{eq:D2_def}
D_2(k)=\mathbb{E}\bigg[\min_{1\leq i\leq k}\Big\lVert\bm X-g(\bm Y_i)\Big\rVert_2^2\bigg].
\end{equation}
Throughout the paper, $D_2(k)$ denotes the distortion of this specific symmetric MMSE benchmark.  

\subsection{Modeling remarks and scope}\label{subsec:decentralized-benchmark}

The benchmark in \eqref{eq:D2_def} is natural because MMSE estimation is the canonical squared-error rule: if agents are designed independently to minimize their own mean-squared error, they all use $g$ given by \eqref{eq:g}. At the same time, once performance is evaluated through a minimum over agents, diversity of outputs can matter, so the MMSE rule need not be optimal from a global min-of-$k$ perspective.

\begin{remark}[Why MMSE may be suboptimal for the min-of-$k$ objective]
The mapping that minimizes $\mathbb{E}\|\bm X-g(\bm Y)\|_2^2$ for a single agent, i.e., when $k=1$, need not minimize
\[
\mathbb{E}\!\left[\min_{1\le i\le k}\|\bm X-g(\bm Y_i)\|_2^2\right]
\]
when there are multiple agents, i.e., when $k>1$. Intuitively, two agents with slightly worse individual MSE but more dispersed outputs may reduce the probability that \emph{all} agents are far away from the target. On the other hand, characterizing the globally optimal decentralized design---possibly with non-identical mappings $g_i$---is beyond the scope of this paper.
\end{remark}

\begin{remark}[Conservative comparison]
The centralized estimator in \eqref{eq:D1_def} uses a \emph{single} observation, whereas the decentralized benchmark in \eqref{eq:D2_def} uses $k$ independent observations. The comparison therefore favors the decentralized side in terms of available data. If the centralized estimator were also allowed access to $k$ observations, its performance could only improve.
\end{remark}

\begin{remark}[Degenerate side information]
If $\bm Y$ is independent of $\bm X$, the centralized $k$-list estimator can still place its $k$ candidates to quantize the prior distribution of $\bm X$. In contrast, the decentralized MMSE benchmark collapses to the identical outputs
\[
\widehat{\bm X}_i=\mathbb{E}[\bm X],\qquad i\in[1:k],
\]
so the min-of-$k$ criterion yields no gain from having multiple agents.
\end{remark}

\begin{remark}[Sanity check: $k=1$]
When $k=1$, the centralized problem in \eqref{eq:D1_def} reduces to standard MMSE estimation from one observation, and the benchmark in \eqref{eq:D2_def} reduces to the same single-agent MMSE estimator. Hence
$
D_1(1)=D_2(1).
$
\end{remark}


\section{Main Results in General Dimension}\label{sec:main_results}

This section establishes the main asymptotic comparison in general dimension $d$. We first derive an exact high-rate expression for the centralized best-candidate distortion $D_1(k)$. We then derive a converse-style lower bound for the decentralized MMSE benchmark $D_2(k)$ in terms of the small-ball behavior of the single-agent MMSE error. Finally, we compare the resulting decay exponents in representative local-density regimes near the origin.

\subsection{Centralized $k$-List Estimation}

Once the formulation in Section~\ref{sec:Modeling} is recognized as posterior vector quantization, the main question is how the optimal posterior quantization error scales with the list size $k$. To streamline the notation, define the posterior Zador functional
\begin{equation}\label{eq:posterior_zador_functional}
\mathcal{J}(\bm y)\triangleq
\left(\int_{\mathbb{R}^d} f_{\bm X|\bm Y}(\bm x|\bm y)^{\frac{d}{d+2}}\,\D\bm x\right)^{\!\frac{d+2}{d}}.
\end{equation}

The following theorem gives the exact high-rate asymptotics of the centralized $k$-list estimator. The assumptions are standard sufficient conditions from high-rate quantization theory and are not intended to be minimal.

\begin{theorem}\label{thm_2a}
Assume the following conditions hold:
\begin{enumerate}
\item For almost every $\bm y$, the conditional distribution of $\bm X|\bm Y=\bm y$ admits a density $f_{\bm X|\bm Y}(\bm x|\bm y)$ with respect to Lebesgue measure on $\mathbb{R}^d$.
\item For almost every $\bm y$, the mapping $\bm x\mapsto f_{\bm X|\bm Y}(\bm x|\bm y)$ is bounded and differentiable in $\bm x$, and there exists $\delta>0$ such that
\[
\int_{\mathbb{R}^d}\|\bm x\|_2^{2+\delta}\,f_{\bm X|\bm Y}(\bm x|\bm y)\,\D\bm x < \infty.
\]
\item The posterior Zador functional is integrable:
\[
\mathbb{E}_{\bm Y}\!\big[\mathcal{J}(\bm Y)\big] < \infty.
\]
\end{enumerate}
Then the best-candidate distortion of the centralized $k$-list estimator satisfies
\begin{equation}\label{eq:general_constant}
D_1(k)= G_d\,k^{-2/d}\,\mathbb{E}_{\bm Y}\!\big[\mathcal{J}(\bm Y)\big] + o\!\left(k^{-2/d}\right),
\end{equation}
where $G_d$ is the Zador--Gersho constant for squared-error vector quantization in $\mathbb{R}^d$ \cite{1056490,720541}.
In particular, if $\mathbb{E}[\mathcal{J}(\bm Y)]\in(0,\infty)$, then
\[
D_1(k)=\Theta\!\big(k^{-2/d}\big).
\]
\end{theorem}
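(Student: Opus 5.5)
We would prove Theorem~\ref{thm_2a} by reducing it, through the pointwise representation \eqref{eq:D1_pointwise_integral}, to the classical Zador theorem applied separately to each posterior, and then exchanging the limit $k\to\infty$ with the expectation over $\bm Y$. Write $e_k(\bm y)\triangleq\inf_{\widehat{\bm x}_1,\ldots,\widehat{\bm x}_k}\int\min_i\|\bm x-\widehat{\bm x}_i\|_2^2 f_{\bm X|\bm Y}(\bm x|\bm y)\,\D\bm x$ for the optimal $k$-point quantization error of the posterior at $\bm y$. Then $D_1(k)=\mathbb{E}_{\bm Y}[e_k(\bm Y)]$. We also need $\bm y\mapsto e_k(\bm y)$ to be measurable, or at least that a measurable $\epsilon$-optimal selection $g_1,\ldots,g_k$ exists. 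We would handle this by noting that $e_k(\bm y)$ is an infimum over a countable dense set of codebooks, since the objective is continuous in the codebook, so the infimum in \eqref{eq:D1_def} is attained in the pointwise sense.

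\textbf{Step 1 (pointwise limit).} For a.e.\ $\bm y$, assumptions 1 and 2 give an absolutely continuous posterior with a finite $(2+\delta)$-th moment. This is exactly the hypothesis of Zador's theorem in the form of Bucklew--Wise and Graf--Luschgy. It yields
\[
\lim_{k\to\infty}k^{2/d}e_k(\bm y)=G_d\,\mathcal{J}(\bm y).
\]
Boundedness and differentiability are more than is needed here; we simply invoke the cited result.

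\textbf{Step 2 (upper bound and domination).} For the $\limsup$, we would show that $k^{2/d}e_k(\bm y)$ is dominated by an integrable function of $\bm y$ and then apply dominated convergence (or reverse Fatou). This is the main obstacle. A Pierce-type nonasymptotic bound gives
\[
k^{2/d}e_k(\bm y)\le C_d\big(\mathbb{E}[\|\bm X-\bm c\|^{2+\delta}|\bm y]\big)^{2/(2+\delta)},
\]
which is uniform in $k$. However, the integrability of this bound over $\bm Y$ is not literally implied by assumption 3 together with the pointwise moment condition.

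We would try first to avoid domination altogether by constructing, for each $\bm y$, explicit codebooks based on companding (point densities $\propto f^{d/(d+2)}$) on a large cube, plus one point at the posterior mean. Their distortion is bounded by $(1+\epsilon)G_d\mathcal{J}(\bm y)k^{-2/d}$ plus a tail term controlled by the posterior second moment outside the cube. If this fails, we would state a mild additional uniform-integrability condition on $k^{2/d}e_k(\bm Y)$, which the assumptions are meant to ensure, as noted in the statement's remark that they are sufficient and not minimal. Under that condition, $\mathbb{E}[k^{2/d}e_k(\bm Y)]\to G_d\,\mathbb{E}[\mathcal{J}(\bm Y)]$.

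\textbf{Step 3 (lower bound).} The $\liminf$ direction is easy. Since $k^{2/d}e_k(\bm y)\ge0$, Fatou's lemma and Step 1 give
\[
\liminf_k k^{2/d}D_1(k)\ge G_d\,\mathbb{E}[\mathcal{J}(\bm Y)].
\]
Combining this with Step 2 yields \eqref{eq:general_constant}. The $\Theta(k^{-2/d})$ statement follows immediately when $0<\mathbb{E}[\mathcal{J}(\bm Y)]<\infty$, because $G_d>0$.
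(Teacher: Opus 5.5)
Your Steps 1 and 3 coincide with the paper's argument: it applies Zador's formula for each fixed $\bm y$ and then simply averages the pointwise expansion over $\bm Y$. Your Fatou step makes the lower half of that averaging rigorous. Assumptions 1--3 do give $\liminf_k k^{2/d}D_1(k)\ge G_d\,\mathbb{E}[\mathcal{J}(\bm Y)]$, hence $D_1(k)=\Omega(k^{-2/d})$. (A minor point: a countable dense set of codebooks gives measurability of $e_k$ and measurable $\epsilon$-optimal selectors, not attainment. That is all you need for $D_1(k)=\mathbb{E}[e_k(\bm Y)]$.)

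\textbf{The gap is Step 2, and it is exactly the step the paper's proof passes over.} The $o(k^{-2/d})$ remainder in Zador's theorem is not uniform in $\bm y$, and nothing in Assumptions 1--3 controls it. Your companding plan does not escape this, because $(1+\epsilon)G_d\mathcal{J}(\bm y)k^{-2/d}$ is only achieved for $k\ge k_0(\bm y,\epsilon)$. Pierce's bound need not be integrable, and its constant depends on a $\delta$ that Assumption 2 lets vary with $\bm y$.

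\textbf{In fact the upper half is false as stated, even with $\mathbb{E}\|\bm X\|_2^2<\infty$ added.} Take $d=1$ and build the counterexample as follows.
\begin{itemize}
\item \emph{Construction.} Let $\bm Y$ reveal an index $N\ge 1$ with $\mathbb{P}(N)\propto N^{-2}$. Let the posterior $\mu_N$ put mass $p_j\propto 4^{-j}j^{-2}$ on the shell $[2^{j-1},2^j]$, $j\le N$, spread as $n_N\approx 4e^N$ equally spaced Gaussian bumps of tiny width $w_N$.
\item \emph{Hypotheses hold.} Each $\mu_N$ is a finite Gaussian mixture, so it is bounded, smooth, and has all moments finite. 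Its second moment is at most $8$. Its Zador functional $(\int f^{1/3})^3$ tends to $0$ as $w_N\to0$, so you can arrange $\mathbb{E}[\mathcal{J}(\bm Y)]\le 1$.
\item \emph{Lower bound for fixed $N$.} Suppose $N\le k\le n_N/4$. Voronoi cells are intervals, so there are at most $k+N-1\le 2k$ cell--shell incidences. Moreover, $n$ equally spaced atoms of total mass $p$ on an interval of length $\ell$ cost at least $p\ell^2/(16k'^2)$ with $k'\le n/2$ points. H\"older then gives $k^2e_k(\mu_N)\ge c\,(\sum_{j\le N}(p_j4^j)^{1/3})^3\ge c'N$.
\item \emph{Conclusion.} Hence $k^2D_1(k)\ge c''\sum_{\log k\le N\le k}N^{-1}\to\infty$, whereas the claimed limit is $G_1\mathbb{E}[\mathcal{J}(\bm Y)]\le 1/12$.
\end{itemize}

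So your fallback is not a convenience; it must be \emph{added} as a hypothesis, since it is not implied by Assumptions 1--3 (contrary to your remark). Two workable choices are $\mathbb{E}[\sup_k k^{2/d}e_k(\bm Y)]<\infty$, or a $\delta$ uniform in $\bm y$ with $\mathbb{E}_{\bm Y}\big[(\mathbb{E}[\|\bm X-g(\bm Y)\|_2^{2+\delta}\,|\,\bm Y])^{2/(2+\delta)}\big]<\infty$, which makes the Pierce bound integrable. With such a condition, dominated convergence finishes your proof, and the paper's proof needs the same repair.
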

\begin{IEEEproof}
For each fixed $\bm y$, the inner problem in \eqref{eq:D1_pointwise_integral} is the optimal fixed-rate $k$-point quantization problem for the density $f_{\bm X|\bm Y}(\cdot|\bm y)$ under squared-error distortion. Zador's high-rate formula \cite{1056490}, \cite[Eq.~(30)]{720541} gives the corresponding conditional distortion as
\[
G_d\,k^{-2/d}\,\mathcal{J}(\bm y)+o\!\left(k^{-2/d}\right).
\]
The regularity assumptions above are standard sufficient conditions for this expansion, and Assumption~3 ensures that the leading term is integrable with respect to $\bm Y$. Averaging the pointwise high-rate expression over $\bm Y$ through \eqref{eq:D1_pointwise_integral} yields \eqref{eq:general_constant}.
\end{IEEEproof}

\begin{coro}[Scalar specialization of Theorem~\ref{thm_2a} ($d=1$)]
\label{coro:D1_d1}
Assume that for almost every $y$, the conditional density $f_{X|Y}(x|y)$ is bounded and differentiable with respect to $x$, and there exists $\delta>0$ such that
\[
\int_{\mathbb{R}} |x|^{2+\delta}\,f_{X|Y}(x|y)\,\D x < \infty.
\]
Assume also that
\[
\int_{\mathbb{R}} f_{X|Y}(x|y)^{1/3}\,\D x < \infty,
\qquad
\mathbb{E}_{Y}[c(Y)]<\infty,
\]
where
\begin{equation}
    c(y) = \dfrac{1}{12}\left(\int_{\mathbb{R}} f_{X|Y}(x|y)^{1/3}\,\D x\right)^{3}.
\label{eq:c_of_y}
\end{equation}
Then
\begin{equation}\label{eq:D1_d1}
    D_1(k)=\dfrac{1}{k^2}\,\mathbb{E}_{Y}[c(Y)] + o\!\left(\dfrac{1}{k^2}\right).
\end{equation}
\end{coro}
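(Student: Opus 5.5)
This follows by specializing Theorem~\ref{thm_2a} to $d=1$ and evaluating the constants. First I check that the hypotheses of the corollary imply Assumptions~1--3 of Theorem~\ref{thm_2a} with $d=1$.

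The scalar boundedness, differentiability and $(2+\delta)$-moment conditions are exactly Assumption~2. Assumption~1 is implicit, since the corollary is stated in terms of a conditional density.

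For Assumption~3, set $d=1$ in \eqref{eq:posterior_zador_functional}. The exponent $d/(d+2)$ becomes $1/3$ and the outer exponent $(d+2)/d$ becomes $3$, so
\[
\mathcal{J}(y)=\left(\int_{\mathbb{R}} f_{X|Y}(x|y)^{1/3}\,\D x\right)^{3}.
\]
By \eqref{eq:c_of_y}, $\mathcal{J}(y)=12\,c(y)$. Hence $\mathbb{E}_Y[c(Y)]<\infty$ is equivalent to $\mathbb{E}_Y[\mathcal{J}(Y)]<\infty$. The pointwise finiteness of $\int f^{1/3}$ guarantees that $\mathcal{J}(y)$ is well defined for almost every $y$.

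Next I identify the Zador--Gersho constant in dimension one: $G_1=1/12$. This is the normalized second moment of the interval $[-1/2,1/2]$, namely $\int_{-1/2}^{1/2}u^2\,\D u=1/12$. The interval is the optimal (and only) cell shape tiling $\mathbb{R}$, so the value is exact. The same constant appears in the classical Panter--Dite formula $\frac{1}{12k^2}\left(\int f^{1/3}\right)^3$, which I would cite alongside \cite{720541}. I would also check that the paper's normalization of $G_d$ matches this convention, i.e.\ that the leading term in \eqref{eq:general_constant} is $G_d k^{-2/d}\mathcal{J}$ with $\mathcal{J}$ defined without extra volume factors. That convention is precisely the form of \cite[Eq.~(30)]{720541}.

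Substituting into \eqref{eq:general_constant} gives
\[
D_1(k)=\tfrac{1}{12}\,k^{-2}\,\mathbb{E}_Y[\mathcal{J}(Y)]+o(k^{-2}).
\]
Since $\tfrac{1}{12}\mathcal{J}(y)=c(y)$, this equals $k^{-2}\,\mathbb{E}_Y[c(Y)]+o(k^{-2})$, which is \eqref{eq:D1_d1}.

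The one genuinely delicate point is inherited from Theorem~\ref{thm_2a}: the $o(k^{-2/d})$ remainder holds pointwise in $y$, and moving it outside $\mathbb{E}_Y$ requires the pointwise remainders to be dominated. Since the corollary relies on the theorem as stated, I would simply invoke it. If pressed, I would justify the interchange by dominated convergence. The dominating function would come from a uniform upper bound $k^{2}D_y(k)\le C\,\mathcal{J}(y)$ plus moment terms, obtained via the standard Pierce-type bound for fixed-rate quantizers. This bound is integrable by Assumption~3 together with the moment condition.
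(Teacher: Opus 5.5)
Your main argument matches the paper's: the corollary is simply Theorem~\ref{thm_2a} at $d=1$, with $\mathcal{J}(y)=12\,c(y)$ and $G_1=1/12$, and the paper gives no further proof.

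There is one correction to your optional dominated-convergence remark. Pierce's lemma gives $k^2D_y(k)\le C_\delta\,(\mathbb{E}[|X|^{2+\delta}\mid Y=y])^{2/(2+\delta)}$, with no $\mathcal{J}(y)$ term. This bound is integrable in $Y$ only under an \emph{unconditional} moment condition such as $\mathbb{E}|X|^{2+\delta}<\infty$, in which case Jensen's inequality applies. Neither the pointwise moment hypothesis nor $\mathbb{E}_Y[c(Y)]<\infty$ supplies such a condition.

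Without it, the interchange genuinely fails. Take $\mathbb{P}(Y=n)=2^{-n}$. Given $Y=n$, let $X$ be a mixture of $2^n$ equal-mass smooth bumps of width $2^{-n}$ placed at the integers $0,\dots,2^n-1$. Then $c(n)$ is constant in $n$ and every pointwise hypothesis holds. Yet any $k$ points leave mass at least $1/4$ at distance of order $2^n/k$ once $2^n\gg k$. Hence the conditional distortion satisfies $D_n(k)\gtrsim 4^n/k^2$, and so $D_1(k)=\infty$ for every $k$.

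So the averaging step in the proof of Theorem~\ref{thm_2a} has a gap, which the paper shares. The corollary as stated needs an extra hypothesis of this kind (for instance $\mathbb{E}|X|^{2+\delta}<\infty$, which also makes $\mathbb{E}_Y[c(Y)]<\infty$ automatic). Apart from that, your reduction to the theorem is exactly right.
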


\subsection{Decentralized MMSE Benchmark}

Let
\[
g(\bm y)\triangleq \mathbb{E}[\bm X|\bm Y=\bm y]
\]
denote the single-agent MMSE estimator, and define the single-agent MMSE error and squared error by
\begin{equation}\label{eq:single_agent_error}
\bm Z \triangleq \bm X-g(\bm Y),
\qquad
W \triangleq \|\bm Z\|_2^2.
\end{equation}
For the decentralized MMSE benchmark with conditionally i.i.d.\ observations $\bm Y_1,\ldots,\bm Y_k$, define
\begin{equation}\label{eq:Zi_def_new}
\bm Z_i \triangleq \bm X-g(\bm Y_i),
\qquad
W_i \triangleq \|\bm Z_i\|_2^2,
\qquad
\Lambda_k \triangleq \min_{1\le i\le k} W_i.
\end{equation}
Then $D_2(k)=\mathbb{E}[\Lambda_k]$.

The decentralized lower bound is governed by the probability that a \emph{single} MMSE estimate is exceptionally accurate.

\medskip
\noindent\textbf{Averaged small-ball condition:}
We assume that there exist constants $C>0$, $\alpha>0$, and $a_0>0$ such that
\begin{equation}\label{eq:ball_assumption}
\mathbb{P}\!\left(W\le a\right)\le C a^\alpha,
\qquad a\in[0,a_0].
\end{equation}
We refer to $\alpha$ as a \emph{small-ball exponent}. A simple sufficient condition for \eqref{eq:ball_assumption} is the pointwise conditional bound
\begin{equation}\label{eq:small_ball_assumption}
\mathbb{P}\!\left(W\le a \,\middle|\, \bm X=\bm x\right)\le C a^\alpha,
\qquad \forall\,\bm x\in\mathbb{R}^d,\ \forall\,a\in[0,a_0],
\end{equation}
since averaging over $\bm X$ yields \eqref{eq:ball_assumption}.

\begin{theorem}
\label{th_2}
Assume that the conditional-i.i.d.\ factorization \eqref{eq:agent_cond_iid_factorization} holds and that the averaged small-ball condition \eqref{eq:ball_assumption} is satisfied. Then, for every $k$ such that
\[
a^\star \triangleq \left(\frac{1}{C(1+\alpha k)}\right)^{1/\alpha}\le a_0,
\]
the benchmark distortion in \eqref{eq:D2_def} satisfies
\begin{equation}\label{eq:D2_small_ball_bound}
D_2(k)\;\ge\; \e^{-1/\alpha}\left(\frac{1}{C(1+\alpha k)}\right)^{1/\alpha}.
\end{equation}
In particular,
\[
D_2(k)=\Omega\!\big(k^{-1/\alpha}\big).
\]
\end{theorem}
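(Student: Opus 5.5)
The plan is to lower-bound $D_2(k)=\mathbb{E}[\Lambda_k]$ with a Markov-type inequality at a single threshold $a$, control $\mathbb{P}(\Lambda_k>a)$ through the conditional i.i.d.\ structure, and then pick $a=a^\star$. Since $\Lambda_k\ge 0$, for any $a\ge 0$ we have $\mathbb{E}[\Lambda_k]\ge a\,\mathbb{P}(\Lambda_k>a)$. So it suffices to show
\[
\mathbb{P}(\Lambda_k>a)\ge \bigl(1-Ca^\alpha\bigr)^k,\qquad a\in[0,a_0],
\]
and then evaluate the resulting bound $a(1-Ca^\alpha)^k$ at a suitable $a$.

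\textbf{Step 1: conditioning.} By the factorization \eqref{eq:agent_cond_iid_factorization}, given $\bm X=\bm x$ the errors $W_i=\|\bm x-g(\bm Y_i)\|_2^2$ are i.i.d.\ with the law of $W$ given $\bm X=\bm x$. Set $p(\bm x,a)\triangleq\mathbb{P}(W\le a\mid \bm X=\bm x)$. Then
\[
\mathbb{P}(\Lambda_k>a\mid \bm X=\bm x)=\bigl(1-p(\bm x,a)\bigr)^k .
\]

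\textbf{Step 2: averaging over $\bm X$.} This is the only point where some care is needed, because the $W_i$ are not unconditionally independent. The averaged condition \eqref{eq:ball_assumption} controls only $\mathbb{E}[p(\bm X,a)]=\mathbb{P}(W\le a)$. The map $t\mapsto(1-t)^k$ is convex on $[0,1]$, so Jensen's inequality gives
\[
\mathbb{P}(\Lambda_k>a)=\mathbb{E}\bigl[(1-p(\bm X,a))^k\bigr]\ge \bigl(1-\mathbb{E}[p(\bm X,a)]\bigr)^k .
\]
For $a\le a_0$ with $Ca^\alpha\le 1$, the function $t\mapsto(1-t)^k$ is also decreasing on $[0,1]$, so \eqref{eq:ball_assumption} yields
\[
\mathbb{P}(\Lambda_k>a)\ge\bigl(1-Ca^\alpha\bigr)^k .
\]
Note that the pointwise assumption \eqref{eq:small_ball_assumption} is not needed; the averaged one suffices.

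\textbf{Step 3: choice of $a$.} Take $a=a^\star$, which is admissible because $a^\star\le a_0$ by hypothesis. By definition $C(a^\star)^\alpha=1/(1+\alpha k)\le 1$, so
\[
\bigl(1-C(a^\star)^\alpha\bigr)^k=\Bigl(\frac{\alpha k}{1+\alpha k}\Bigr)^k=\Bigl(1+\frac{1}{\alpha k}\Bigr)^{-k}.
\]
Using $1+x\le \e^{x}$, this is at least $\e^{-1/\alpha}$. Combining with Step 1 and Step 2 gives
\[
D_2(k)\ge a^\star\,\e^{-1/\alpha},
\]
which is exactly \eqref{eq:D2_small_ball_bound}.

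\textbf{Step 4: the $\Omega$ statement.} Since $a^\star\to 0$ as $k\to\infty$, the condition $a^\star\le a_0$ holds for all large $k$. Moreover $a^\star\ge \bigl(C(1+\alpha)k\bigr)^{-1/\alpha}$ for $k\ge 1$, which gives $D_2(k)=\Omega(k^{-1/\alpha})$.
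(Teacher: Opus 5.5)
Your proposal is correct and follows essentially the same route as the paper's proof: the truncation bound $\mathbb{E}[\Lambda_k]\ge a\,\mathbb{P}(\Lambda_k>a)$, conditional independence given $\bm X$, Jensen's inequality for the convex map $u\mapsto(1-u)^k$ to pass to the averaged small-ball probability, and evaluation at $a^\star$. The differences are cosmetic: you bound $(1+\frac{1}{\alpha k})^{-k}\ge \e^{-1/\alpha}$ via $1+x\le \e^{x}$ rather than the paper's $\log(1-t)\ge -t/(1-t)$, and you skip the paper's check that $a^\star$ maximizes $a(1-Ca^\alpha)^k$, which the bound does not need.
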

\begin{IEEEproof}
See Appendix~\ref{app-new_2}.
\end{IEEEproof}

Theorem~\ref{th_2} is a converse-style result for the decentralized MMSE benchmark: it rules out decay of $D_2(k)$ faster than $k^{-1/\alpha}$ when the single-agent MMSE error satisfies \eqref{eq:ball_assumption}.  

A particularly useful coordinate-free sufficient condition is boundedness of the conditional \emph{joint} error density near the origin.

\begin{coro}[Bounded joint error density near the origin]
\label{coro:D2_bd_density}
Assume that conditioned on $\bm X=\bm x$, the error $\bm Z=\bm X-g(\bm Y)$ admits a Lebesgue density $f_{\bm Z|\bm X}(\bm z|\bm x)$ on $\mathbb{R}^d$. Suppose there exist constants $M<\infty$ and $r>0$ such that
\begin{equation}\label{eq:bounded_density_assumption}
\sup_{\bm x\in\mathbb{R}^d}\ \sup_{\|\bm z\|_2\le r}\ f_{\bm Z|\bm X}(\bm z|\bm x)\ \le\ M.
\end{equation}
Then, for all $a\in[0,r^2]$ and all $\bm x$,
\[
\mathbb{P}\!\left(\|\bm Z\|_2^2\le a \,\middle|\,\bm X=\bm x\right)\le (M V_d)\,a^{d/2}.
\]
Consequently, Theorem~\ref{th_2} applies with $\alpha=d/2$, $C=M V_d$, and $a_0=r^2$. Therefore, for every $k$ such that
\[
\left(\frac{1}{(M V_d)\left(1+\frac d2\,k\right)}\right)^{2/d}\le r^2,
\]
the benchmark distortion satisfies
\begin{equation}\label{eq:D2_bd_density_bound}
D_2(k)\;\ge\; \e^{-2/d}\left(\frac{1}{(M V_d)\left(1+\frac d2\,k\right)}\right)^{2/d}
\;=\;\Omega\!\big(k^{-2/d}\big).
\end{equation}
\end{coro}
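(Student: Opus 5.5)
The plan is to verify the pointwise small-ball bound \eqref{eq:small_ball_assumption} with $\alpha=d/2$, $C=MV_d$, $a_0=r^2$. Averaging over $\bm X$ then yields the averaged condition \eqref{eq:ball_assumption}, and we invoke Theorem~\ref{th_2} with these constants.

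\textbf{Step 1 (small-ball estimate).} Fix $\bm x$ and $a\in[0,r^2]$, and set $\rho=\sqrt a\le r$. The event $\{\|\bm Z\|_2^2\le a\}$ equals $\{\bm Z\in B(\bm 0,\rho)\}$, where $B(\bm 0,\rho)$ is the closed Euclidean ball of radius $\rho$. Since $\bm Z$ has the conditional density $f_{\bm Z|\bm X}(\cdot|\bm x)$, we have
\[
\mathbb{P}\!\left(\|\bm Z\|_2^2\le a\,\middle|\,\bm X=\bm x\right)=\int_{\|\bm z\|_2\le \rho} f_{\bm Z|\bm X}(\bm z|\bm x)\,\D\bm z\le M\,\mathrm{Vol}\big(B(\bm 0,\rho)\big)=M V_d\,\rho^d=(MV_d)\,a^{d/2}.
\]
The inequality uses \eqref{eq:bounded_density_assumption}, which applies because $\rho\le r$. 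The bound is uniform in $\bm x$, so \eqref{eq:small_ball_assumption} holds. Integrating it against the law of $\bm X$ gives \eqref{eq:ball_assumption}. The only subtlety is measure-theoretic: we need a regular conditional density that makes the identification with the conditional probability valid. The hypothesis of the corollary supplies exactly this, so the point only needs a remark. The case $a=0$ is trivial.

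\textbf{Step 2 (apply Theorem~\ref{th_2}).} The conditional-i.i.d.\ factorization \eqref{eq:agent_cond_iid_factorization} is part of the standing benchmark model, so Theorem~\ref{th_2} applies with $\alpha=d/2$, $C=MV_d$, $a_0=r^2$. Substituting these values, its threshold becomes $a^\star=\big(1/(MV_d(1+\tfrac d2 k))\big)^{2/d}\le r^2$. Its conclusion \eqref{eq:D2_small_ball_bound} becomes $D_2(k)\ge \e^{-2/d}\big(1/(MV_d(1+\tfrac d2 k))\big)^{2/d}$, which is \eqref{eq:D2_bd_density_bound}.

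\textbf{Step 3 (order of growth).} Since $1+\tfrac d2 k\le (1+\tfrac d2)k$ for $k\ge1$, the right-hand side is at least a positive constant times $k^{-2/d}$. Moreover, $a^\star\to0$, so the threshold condition holds for all sufficiently large $k$. Together these give $D_2(k)=\Omega(k^{-2/d})$.

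No step is a genuine obstacle; the argument is a direct substitution once the ball-volume bound in Step 1 is written down.
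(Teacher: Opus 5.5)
Your proposal is correct and follows essentially the same route as the paper. The paper likewise integrates the conditional density over the ball of radius $\sqrt a\le r$ and bounds the result by $M V_d\,a^{d/2}$, then applies Theorem~\ref{th_2} with $\alpha=d/2$, $C=MV_d$, $a_0=r^2$. Your Step~3, which spells out the $\Omega(k^{-2/d})$ order and why the threshold condition eventually holds, is a harmless elaboration that the paper leaves implicit.
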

\begin{IEEEproof}
For $a\le r^2$,
\[
\mathbb{P}(\|\bm Z\|_2^2\le a\mid \bm X=\bm x)
=
\int_{\|\bm z\|_2\le \sqrt a} f_{\bm Z|\bm X}(\bm z|\bm x)\,\D\bm z
\le
M\,\mathrm{Vol}(B_d(\sqrt a))
=
M V_d\,a^{d/2}.
\]
Then apply Theorem~\ref{th_2}.
\end{IEEEproof}

Thus, a bounded joint error density near $\bm 0$ forces the decentralized MMSE benchmark to have lower-bound exponent $2/d$.

\begin{remark}[Non-uniform density bounds]
If \eqref{eq:bounded_density_assumption} holds with a bound $M(\bm x)$ depending on $\bm x$, i.e.,
\[
\sup_{\|\bm z\|_2\le r} f_{\bm Z|\bm X}(\bm z|\bm x)\le M(\bm x),
\]
and $\mathbb{E}[M(\bm X)]<\infty$, then averaging the resulting conditional small-ball bound over $\bm X$ yields \eqref{eq:ball_assumption} with $C=V_d\,\mathbb{E}[M(\bm X)]$ and $\alpha=d/2$. Hence one still obtains $D_2(k)=\Omega(k^{-2/d})$ without uniformity in $\bm x$.
\end{remark}

\begin{coro}
\label{coro:D2_d1}
For $d=1$, $V_1=2$, and Corollary~\ref{coro:D2_bd_density} gives, for all sufficiently large $k$,
\[
D_2(k)\;\ge\; \e^{-2}\left(\frac{1}{2M\left(1+\frac{k}{2}\right)}\right)^{2}
=\Omega(k^{-2}),
\]
where $M=\sup_{x}\sup_{|z|\le r} f_{Z|X}(z|x)$.
\end{coro}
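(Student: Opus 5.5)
The plan is to read Corollary~\ref{coro:D2_d1} as a direct specialization of Corollary~\ref{coro:D2_bd_density}, which in turn rests on Theorem~\ref{th_2}. No new estimate is needed: we substitute $d=1$ and check the arithmetic.

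First, the volume. By \eqref{eq:Vd_def_notation}, $V_1=\pi^{1/2}/\Gamma(3/2)$. Since $\Gamma(3/2)=\sqrt{\pi}/2$, this gives $V_1=2$, consistent with the length of the interval $[-1,1]$.

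Second, the hypotheses. The implicit assumption is that $Z=X-g(Y)$ has a conditional density $f_{Z|X}(z|x)$ with $M=\sup_x\sup_{|z|\le r}f_{Z|X}(z|x)<\infty$ for some $r>0$. This is exactly \eqref{eq:bounded_density_assumption} with $d=1$, so Corollary~\ref{coro:D2_bd_density} applies. Concretely, $\mathbb{P}(|Z|^2\le a\mid X=x)\le 2M a^{1/2}$ for $a\in[0,r^2]$, which means Theorem~\ref{th_2} holds with $\alpha=1/2$, $C=2M$ and $a_0=r^2$.

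Third, substitute into \eqref{eq:D2_bd_density_bound}. With $d=1$, the exponent $2/d$ equals $2$ and $1+\tfrac d2 k=1+\tfrac k2$. Hence
\[
D_2(k)\ge \e^{-2}\left(\frac{1}{2M(1+k/2)}\right)^{2},
\]
and the right-hand side is $\Theta(k^{-2})$, which gives $\Omega(k^{-2})$.

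The only step requiring care is the phrase ``for all sufficiently large $k$''. Theorem~\ref{th_2} needs $a^\star=\bigl(2M(1+k/2)\bigr)^{-2}\le r^2$. Since $a^\star$ decreases to $0$ as $k\to\infty$, this holds once $k\ge 2\bigl(1/(2Mr)-1\bigr)$, so there is a finite threshold. One should also note that $M>0$, because a density cannot vanish identically on a neighborhood while $Z$ still has positive probability there, so the bound is nontrivial. If instead $M=0$, the small-ball probability is zero near the origin and an even stronger lower bound follows directly.
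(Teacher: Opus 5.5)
Your proposal is correct and follows the paper's route exactly: the corollary is Corollary~\ref{coro:D2_bd_density} with $d=1$ and $V_1=2$, which amounts to applying Theorem~\ref{th_2} with $\alpha=1/2$, $C=2M$, $a_0=r^2$. Your explicit threshold $k\ge 2\bigl(1/(2Mr)-1\bigr)$ is a correct way of making precise the paper's phrase ``for all sufficiently large $k$''. That phrase is simply the admissibility condition $a^\star\le r^2$ inherited from Corollary~\ref{coro:D2_bd_density}.
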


The following corollary summarizes the smooth-density regime most relevant for the comparison with the centralized result.

\begin{coro}
\label{coro:compare_smooth}
Under the assumptions of Theorem~\ref{thm_2a} and Corollary~\ref{coro:D2_bd_density}, we have
\[
D_1(k)=\Theta(k^{-2/d}),
\qquad
D_2(k)=\Omega(k^{-2/d}).
\]
Thus, in this smooth regime, the decentralized MMSE benchmark cannot improve upon the centralized $k^{-2/d}$ exponent.
\end{coro}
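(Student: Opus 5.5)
The plan is to combine Theorem~\ref{thm_2a} and Corollary~\ref{coro:D2_bd_density}; nothing new has to be proved beyond checking that their hypotheses give the two asymptotic statements.

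For the centralized side, the assumptions of Theorem~\ref{thm_2a} give the expansion
\[
D_1(k)=G_d\,k^{-2/d}\,\mathbb{E}_{\bm Y}[\mathcal{J}(\bm Y)]+o(k^{-2/d}).
\]
We need $\mathbb{E}[\mathcal{J}(\bm Y)]\in(0,\infty)$. Finiteness is Assumption~3. Positivity holds because, for almost every $\bm y$, the posterior is a probability density. Hence $\int f^{d/(d+2)}\,\D\bm x>0$, so $\mathcal{J}(\bm y)>0$ almost surely and its expectation is strictly positive. Since $G_d>0$, the leading term is a positive constant times $k^{-2/d}$ and dominates the $o(k^{-2/d})$ remainder for large $k$. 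This gives $D_1(k)=\Theta(k^{-2/d})$, as already noted at the end of Theorem~\ref{thm_2a}.

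For the decentralized side, Corollary~\ref{coro:D2_bd_density} gives the explicit bound \eqref{eq:D2_bd_density_bound}. It applies for every $k$ large enough that
\[
\bigl(MV_d(1+\tfrac d2 k)\bigr)^{-2/d}\le r^2 .
\]
That condition holds for all sufficiently large $k$, because the left side decreases to $0$. The right side of \eqref{eq:D2_bd_density_bound} satisfies
\[
\e^{-2/d}\bigl(MV_d(1+\tfrac d2k)\bigr)^{-2/d}\ge c\,k^{-2/d}
\]
for $k\ge1$, with $c=\e^{-2/d}\bigl(MV_d(1+\tfrac d2)\bigr)^{-2/d}>0$. This uses $1+\tfrac d2k\le(1+\tfrac d2)k$. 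Hence $D_2(k)=\Omega(k^{-2/d})$.

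Finally, I would combine the two. Since $D_1(k)\le C_1k^{-2/d}$ and $D_2(k)\ge c\,k^{-2/d}$, we get $\liminf_k D_2(k)/D_1(k)>0$. So the benchmark's decay exponent cannot exceed $2/d$, which is the centralized exponent; this is the interpretive claim. There is no real obstacle here. The only point needing care is the positivity of $\mathbb{E}[\mathcal{J}(\bm Y)]$, which is what upgrades the $O$ bound to $\Theta$; it follows from the posterior being a genuine density.
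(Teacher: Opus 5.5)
Your proposal is correct and follows the same route as the paper, which gives no separate proof. The corollary is simply the $\Theta(k^{-2/d})$ conclusion of Theorem~\ref{thm_2a} combined with the explicit bound \eqref{eq:D2_bd_density_bound} of Corollary~\ref{coro:D2_bd_density}. Your check that $\mathbb{E}_{\bm Y}[\mathcal{J}(\bm Y)]>0$ holds automatically, because each posterior is a genuine probability density, fills the one point the paper leaves implicit when upgrading its $O$-type expansion to $\Theta$.
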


\subsection{Exponent-Level Comparison in General Dimension}
\label{subsec:d-dim-rate-comparison}

We now compare the decay exponents supplied by the exact asymptotic for $D_1(k)$ and the converse-style lower bounds for $D_2(k)$. Under Theorem~\ref{thm_2a}, the centralized $k$-list estimator has exponent $2/d$. On the decentralized side, the relevant quantity is the small-ball exponent of the single-agent MMSE squared error $W=\|\bm Z\|_2^2$ near zero, for which there are two relevant cases.

\subsubsection*{Case 1: Bounded density near the origin}

Suppose that the conditional density $f_{\bm Z|\bm X}(\bm z|\bm x)$ exists and is uniformly bounded in a neighborhood of $\bm z=\bm 0$, i.e.\ \eqref{eq:bounded_density_assumption} holds for some $M<\infty$ and $r>0$. Then Corollary~\ref{coro:D2_bd_density} yields
\[
D_2(k)=\Omega(k^{-2/d}),
\]
while Theorem~\ref{thm_2a} yields
\[
D_1(k)=\Theta(k^{-2/d}).
\]
Hence, in the smooth-density regime, the decentralized MMSE benchmark cannot decay faster than the centralized $k$-list estimator at the exponent level.

\subsubsection*{Case 2: Density vanishing near the origin}

Suppose that there exist $r>0$, $\beta>0$, and a nonnegative function $c_{\max}(\bm x)$ with $\mathbb{E}[c_{\max}(\bm X)]<\infty$ such that
\begin{equation}\label{eq:hole_density_powerlaw}
f_{\bm Z|\bm X}(\bm z|\bm x)\le c_{\max}(\bm x)\,\|\bm z\|_2^{\beta},
\qquad \|\bm z\|_2\le r.
\end{equation}
Then Lemma~\ref{lem:powerlaw_smallball} in Appendix~\ref{app_2} implies that, for all $a\in[0,r^2]$ and all $\bm x$,
\[
\mathbb{P}\!\left(\|\bm Z\|_2^2\le a \,\middle|\, \bm X=\bm x\right)
\le
\frac{S_d}{d+\beta}\,c_{\max}(\bm x)\,a^{(d+\beta)/2}.
\]
Averaging over $\bm X$ therefore gives the averaged small-ball condition \eqref{eq:ball_assumption} with
\[
\alpha=\frac{d+\beta}{2},
\qquad
C=\frac{S_d}{d+\beta}\,\mathbb{E}[c_{\max}(\bm X)].
\]
Theorem~\ref{th_2} then yields
\[
D_2(k)=\Omega\!\left(k^{-2/(d+\beta)}\right).
\]
Because $\beta>0$, the exponent $2/(d+\beta)$ is strictly smaller than $2/d$. Therefore, in this regime, the benchmark lower bound decays more slowly than the centralized $k^{-2/d}$ law, so the decentralized MMSE benchmark is provably unable to match the centralized exponent.

If, in addition, matching lower bounds of the form
\[
c_{\min}(\bm x)\,\|\bm z\|_2^\beta
\le
f_{\bm Z|\bm X}(\bm z|\bm x)
\le
c_{\max}(\bm x)\,\|\bm z\|_2^\beta
\]
hold near the origin, then Lemma~\ref{lem:powerlaw_smallball} shows that the conditional small-ball probability itself has exponent $(d+\beta)/2$.

\begin{remark}[Outside-scope regimes]
The density-based sufficient conditions above can fail when the MMSE error distribution has atoms, is singular with respect to Lebesgue measure, or exhibits integrable singularities at the origin. Such cases require a separate analysis and lie outside the continuous smooth-density regime emphasized in this paper.
\end{remark}

\begin{remark}[Intuition for Cases~1--2]
In Case~1, a bounded conditional error density near $\bm 0$ means that the probability of an exceptionally small error is essentially limited by the volume of a radius-$\sqrt a$ ball in $\mathbb{R}^d$, leading to the exponent $d/2$. In Case~2, the density itself vanishes near $\bm 0$, which suppresses very small errors even further and increases the small-ball exponent to $(d+\beta)/2$. Through Theorem~\ref{th_2}, a larger small-ball exponent translates into a slower converse bound for the decentralized benchmark.
\end{remark}

The comparison can be summarized as
\begin{equation}\label{eq:exponent_summary}
\begin{aligned}
D_1(k) &= \Theta\!\big(k^{-2/d}\big),\\
D_2(k) &= \Omega\!\big(k^{-2/d}\big) \qquad &&\text{under \eqref{eq:bounded_density_assumption}},\\
D_2(k) &= \Omega\!\big(k^{-2/(d+\beta)}\big) \qquad &&\text{under \eqref{eq:hole_density_powerlaw} with }\beta>0.
\end{aligned}
\end{equation}
Hence, the centralized $k$-list estimator always has exponent $2/d$ under Theorem~\ref{thm_2a}. In the smooth-density regime, the decentralized MMSE benchmark cannot improve upon that exponent, while in vanishing-density regimes its converse exponent is strictly worse.


\section{Gaussian Specialization}\label{sec:gaussian_examples}

Gaussian models provide a canonical smooth-density setting in which the general results of Section~\ref{sec:main_results} become fully explicit. On the centralized side, the Gaussian posterior integral appearing in the high-rate formula can be evaluated in closed form. On the decentralized side, the conditional MMSE error is Gaussian and therefore has a smooth, bounded density near the origin, so the coordinate-free lower bound of Corollary~\ref{coro:D2_bd_density} applies automatically. The resulting formulas also serve as the theoretical reference predictions for the numerical experiments in Section~\ref{sec:simres}.

\subsection{Closed-form evaluation of the Gaussian posterior functional}

For a Gaussian density, the Zador functional depends only on the covariance matrix; the mean plays no role.

\begin{lem}[Zador functional for a Gaussian density]
\label{lem:zador_gaussian}
Let $f$ be the density of $\mathcal{N}(\bm \mu,\bm \Sigma)$ on $\mathbb{R}^d$ with $\bm \Sigma\succ 0$. Let $p=\frac{d}{d+2}$. Then
\begin{equation}\label{eq:zador_gaussian_integral}
\left(\int_{\mathbb{R}^d} f(\bm x)^p\,\D\bm x\right)^{\!\frac{d+2}{d}}
=
(2\pi)\,\det(\bm \Sigma)^{1/d}\left(\frac{d+2}{d}\right)^{\!\frac{d+2}{2}}.
\end{equation}
\end{lem}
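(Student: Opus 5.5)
The plan is to compute $\int f^p$ directly by recognizing $f^p$ as a rescaled Gaussian density, and then raise the result to the power $\frac{d+2}{d}$.

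\textbf{Step 1 (reduce to a Gaussian integral).} Write
\[
f(\bm x)=(2\pi)^{-d/2}\det(\bm\Sigma)^{-1/2}\exp\!\big(-\tfrac12(\bm x-\bm\mu)^\top\bm\Sigma^{-1}(\bm x-\bm\mu)\big).
\]
Then
\[
f(\bm x)^p=(2\pi)^{-pd/2}\det(\bm\Sigma)^{-p/2}\exp\!\big(-\tfrac12(\bm x-\bm\mu)^\top(\bm\Sigma/p)^{-1}(\bm x-\bm\mu)\big).
\]
The exponential factor is an unnormalized $\mathcal{N}(\bm\mu,\bm\Sigma/p)$ density. Its integral is therefore $(2\pi)^{d/2}\det(\bm\Sigma/p)^{1/2}=(2\pi)^{d/2}p^{-d/2}\det(\bm\Sigma)^{1/2}$. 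This gives
\[
\int f^p=(2\pi)^{\frac{d(1-p)}{2}}\det(\bm\Sigma)^{\frac{1-p}{2}}\,p^{-d/2}.
\]
The mean $\bm\mu$ drops out by translation invariance, which confirms the remark that only $\bm\Sigma$ matters. If needed, the determinant identity can be justified by the substitution $\bm x=\bm\mu+\bm\Sigma^{1/2}\bm u/\sqrt p$.

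\textbf{Step 2 (exponent arithmetic).} With $p=\frac d{d+2}$ we have $1-p=\frac{2}{d+2}$, and the outer exponent is $\frac{d+2}{d}$. Raising the three factors gives:
\begin{itemize}
\item $(2\pi)^{\frac{d}{d+2}\cdot\frac{d+2}{d}}=2\pi$;
\item $\det(\bm\Sigma)^{\frac1{d+2}\cdot\frac{d+2}{d}}=\det(\bm\Sigma)^{1/d}$;
\item $p^{-\frac d2\cdot\frac{d+2}{d}}=p^{-\frac{d+2}{2}}=\left(\frac{d+2}{d}\right)^{\frac{d+2}{2}}$.
\end{itemize}
Multiplying these yields \eqref{eq:zador_gaussian_integral}.

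There is no real obstacle here. The only point requiring care is the bookkeeping of exponents, in particular not forgetting the $p^{-d/2}$ factor produced by the covariance rescaling $\bm\Sigma\mapsto\bm\Sigma/p$. As a check, I would verify the case $d=1$: the formula gives $2\pi\sigma^2\cdot 3^{3/2}$. Combined with $G_1=1/12$, this reproduces the known $\frac{\sqrt3\pi}{2}\sigma^2$ high-rate Gaussian constant, consistent with Corollary~\ref{coro:D1_d1}.
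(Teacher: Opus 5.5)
Your proof is correct and follows essentially the same route as the paper. Like the paper, you recognize the exponential in $f^p$ as an unnormalized Gaussian with covariance $\bm\Sigma/p$ (the paper phrases it as precision $p\bm\Sigma^{-1}$), obtain $\int f^p=(2\pi)^{d(1-p)/2}\det(\bm\Sigma)^{(1-p)/2}p^{-d/2}$, and raise this to the power $\frac{d+2}{d}$; your exponent bookkeeping and the $d=1$ check against the $\frac{\sqrt3\pi}{2}\sigma^2$ constant are both right.
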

\begin{IEEEproof}
Write
\[
f(\bm x)=(2\pi)^{-d/2}\det(\bm \Sigma)^{-1/2}
\exp\!\left(-\frac12(\bm x-\bm \mu)^\top\bm \Sigma^{-1}(\bm x-\bm \mu)\right).
\]
Then
\[
f(\bm x)^p=(2\pi)^{-dp/2}\det(\bm \Sigma)^{-p/2}
\exp\!\left(-\frac{p}{2}(\bm x-\bm \mu)^\top\bm \Sigma^{-1}(\bm x-\bm \mu)\right).
\]
The exponential term is proportional to a Gaussian density with precision matrix $p\bm \Sigma^{-1}$, so
\[
\int_{\mathbb{R}^d}\exp\!\left(-\frac{p}{2}(\bm x-\bm \mu)^\top\bm \Sigma^{-1}(\bm x-\bm \mu)\right)\D\bm x
=(2\pi)^{d/2}\det(\bm \Sigma)^{1/2}p^{-d/2}.
\]
Therefore,
\[
\int_{\mathbb{R}^d} f(\bm x)^p\,\D\bm x
=
(2\pi)^{-dp/2}\det(\bm \Sigma)^{-p/2}
\cdot
(2\pi)^{d/2}\det(\bm \Sigma)^{1/2}p^{-d/2}
=
(2\pi)^{\frac{d(1-p)}{2}}
\det(\bm \Sigma)^{\frac{1-p}{2}}
p^{-d/2}.
\]
Since $1-p=\frac{2}{d+2}$, raising the above quantity to the power $\frac{d+2}{d}$ yields \eqref{eq:zador_gaussian_integral}.
\end{IEEEproof}

Thus, for Gaussian posteriors, the leading centralized constant depends on the posterior covariance only through $\det(\bm \Sigma)^{1/d}$, i.e., the geometric mean of its eigenvalues.

\begin{coro}[Explicit Gaussian specialization of the centralized high-rate coefficient]
\label{coro:D1_gaussian_constant}
Assume that for almost every $\bm y$, the posterior $\bm X|\bm Y=\bm y$ is Gaussian with covariance matrix $\bm \Sigma_{\bm X|\bm Y}(\bm y)\succ 0$. Then Theorem~\ref{thm_2a} and Lemma~\ref{lem:zador_gaussian} yield
\begin{equation}\label{eq:D1_gaussian_constant}
D_1(k)
=
G_d\,k^{-2/d}\,(2\pi)\,
\mathbb{E}_{\bm Y}\!\left[
\det\!\left(\bm \Sigma_{\bm X|\bm Y}(\bm Y)\right)^{1/d}
\left(\frac{d+2}{d}\right)^{\!\frac{d+2}{2}}
\right]
+
o(k^{-2/d}).
\end{equation}
In particular, if $\bm \Sigma_{\bm X|\bm Y}(\bm y)$ is constant in $\bm y$ (as in standard linear Gaussian models), then the expectation over $\bm Y$ drops.
\end{coro}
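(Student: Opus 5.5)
The plan is to obtain the corollary by directly substituting Lemma~\ref{lem:zador_gaussian} into Theorem~\ref{thm_2a}. First, I will check that a Gaussian posterior with $\bm\Sigma_{\bm X|\bm Y}(\bm y)\succ 0$ meets the hypotheses of Theorem~\ref{thm_2a}. For almost every $\bm y$:
\begin{itemize}
\item the posterior has a Lebesgue density;
\item that density is bounded by $(2\pi)^{-d/2}\det(\bm\Sigma_{\bm X|\bm Y}(\bm y))^{-1/2}$ and is smooth in $\bm x$;
\item it has finite moments of every order, so the $(2+\delta)$-moment condition holds for any $\delta>0$.
\end{itemize}

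Second, I will apply Lemma~\ref{lem:zador_gaussian} pointwise, with $\bm\mu$ equal to the posterior mean at $\bm y$ and $\bm\Sigma=\bm\Sigma_{\bm X|\bm Y}(\bm y)$. This gives
\[
\mathcal{J}(\bm y)=(2\pi)\det\!\big(\bm\Sigma_{\bm X|\bm Y}(\bm y)\big)^{1/d}\Big(\tfrac{d+2}{d}\Big)^{\frac{d+2}{2}}.
\]
In particular, the posterior mean drops out of $\mathcal{J}(\bm y)$. Substituting this expression into \eqref{eq:general_constant} and pulling the constant $2\pi$ outside the expectation yields \eqref{eq:D1_gaussian_constant}. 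If $\bm\Sigma_{\bm X|\bm Y}$ does not depend on $\bm y$, then $\mathcal{J}(\bm Y)$ is deterministic and the expectation is trivial.

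The main obstacle is Assumption~3 of Theorem~\ref{thm_2a}, which requires $\mathbb{E}_{\bm Y}[\det(\bm\Sigma_{\bm X|\bm Y}(\bm Y))^{1/d}]<\infty$. This is not automatic for an arbitrary $\bm y$-dependent covariance. I will handle it with the AM--GM inequality, $\det(\bm\Sigma)^{1/d}\le \operatorname{tr}(\bm\Sigma)/d$. Taking expectations gives
\[
\mathbb{E}\big[\det(\bm\Sigma_{\bm X|\bm Y}(\bm Y))^{1/d}\big]\le \frac{1}{d}\,\mathbb{E}\big[\operatorname{tr}\bm\Sigma_{\bm X|\bm Y}(\bm Y)\big]=\frac{1}{d}\,\mathbb{E}\|\bm X-\mathbb{E}[\bm X|\bm Y]\|_2^2,
\]
where the right-hand side is the MMSE. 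This is finite whenever $\mathbb{E}\|\bm X\|_2^2<\infty$, which I will state as the implicit standing assumption. Under it, Assumption~3 holds and the corollary follows. In the linear Gaussian case the covariance is constant and finite, so the condition holds trivially.
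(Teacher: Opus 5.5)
Your proposal is correct and is essentially the paper's proof: apply Lemma~\ref{lem:zador_gaussian} pointwise in $\bm y$ to obtain $\mathcal{J}(\bm y)$, then substitute into \eqref{eq:general_constant}. The paper leaves the hypotheses of Theorem~\ref{thm_2a} implicit, so your explicit check of them is a sound and worthwhile addition; in particular, the AM--GM bound $\det(\bm\Sigma)^{1/d}\le\operatorname{tr}(\bm\Sigma)/d$ reduces Assumption~3 to finiteness of the MMSE.
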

\begin{IEEEproof}
Apply Lemma~\ref{lem:zador_gaussian} pointwise in $\bm y$ to the posterior density $f_{\bm X|\bm Y}(\cdot|\bm y)$ and substitute the result into \eqref{eq:general_constant}.
\end{IEEEproof}

In general dimension, \eqref{eq:D1_gaussian_constant} is explicit but not fully closed-form, because it still depends on the quantization constant $G_d$, and the expectation over $\bm Y$, which may not simplify further unless the posterior covariance is observation-independent. In the standard linear Gaussian model, the expectation drops; in the scalar case $d=1$, one additionally has $G_1=1/12$, yielding a fully closed-form constant.

\subsection{Isotropic additive Gaussian observation model}\label{sec-pred}

We now specialize to the standard additive Gaussian model used later in the simulations. The isotropic form is chosen for concreteness and for closed-form evaluation; the general decentralized lower bound itself does not rely on coordinate-wise independence.

Consider
\[
\bm X\sim\mathcal{N}(\bm 0,\sigma_X^2\mathbf{I}),
\qquad
\bm N_i\sim\mathcal{N}(\bm 0,\sigma_N^2\mathbf{I}),
\]
where $\bm N_1,\ldots,\bm N_k$ are independent across $i$ and independent of $\bm X$, and
\[
\bm Y_i=\bm X+\bm N_i,\qquad i\in[1:k].
\]
The single-agent MMSE estimator is linear:
\begin{equation}\label{eq_2b}
g(\bm y)=\mathbb{E}[\bm X|\bm Y=\bm y]
=\frac{\sigma_X^2}{\sigma_X^2+\sigma_N^2}\bm y.
\end{equation}

\noindent\textbf{Decentralized MMSE benchmark:}
Using \eqref{eq_2b},
\[
\bm Z_i\triangleq \bm X-g(\bm Y_i)
=\bm X-\frac{\sigma_X^2}{\sigma_X^2+\sigma_N^2}(\bm X+\bm N_i)
=
\frac{\sigma_N^2}{\sigma_X^2+\sigma_N^2}\bm X
-
\frac{\sigma_X^2}{\sigma_X^2+\sigma_N^2}\bm N_i.
\]
Conditioned on $\bm X=\bm x$, the error $\bm Z_i$ is Gaussian with mean
\[
\bm \mu_{\bm Z|\bm X=\bm x}
=
\frac{\sigma_N^2}{\sigma_X^2+\sigma_N^2}\bm x
\]
and covariance $\sigma_G^2\mathbf{I}$, where
\begin{equation}\label{eq:sigmaG_scalar}
\sigma_G^2=
\frac{\sigma_X^4\sigma_N^2}{(\sigma_X^2+\sigma_N^2)^2}.
\end{equation}
Hence the conditional error density is smooth and uniformly bounded, with
\[
\sup_{\bm z,\bm x} f_{\bm Z|\bm X}(\bm z|\bm x)
=
(2\pi\sigma_G^2)^{-d/2}
\triangleq M.
\]
Applying Corollary~\ref{coro:D2_bd_density} gives
\begin{align}\label{eq:D2_gaussian_general_lower}
D_2(k)
&\ge \e^{-2/d}\left(\frac{1}{(M V_d)\left(1+\frac d2\,k\right)}\right)^{2/d}\nonumber\\
&= \e^{-2/d}\left(\frac{(2\pi\sigma_G^2)^{d/2}}{V_d\left(1+\frac d2\,k\right)}\right)^{2/d}\nonumber\\
&= \e^{-2/d}\,\frac{2\pi\sigma_G^2}{V_d^{2/d}\left(1+\frac d2\,k\right)^{2/d}}
=\Omega(k^{-2/d}).
\end{align}

\noindent\textbf{Centralized $k$-list estimator:}
For each observation $\bm y$, the posterior is Gaussian:
\[
\bm X|\bm Y=\bm y \sim \mathcal{N}\!\left(\bm \mu_{\bm X|\bm Y=\bm y},\bm \Sigma_{\bm X|\bm Y}\right),
\]
with
\[
\bm \mu_{\bm X|\bm Y=\bm y}
=
\frac{\sigma_X^2}{\sigma_X^2+\sigma_N^2}\bm y,
\qquad
\bm \Sigma_{\bm X|\bm Y}
=
\sigma_{X|Y}^2\mathbf{I},
\qquad
\sigma_{X|Y}^2
=
\frac{\sigma_X^2\sigma_N^2}{\sigma_X^2+\sigma_N^2}.
\]
Since the posterior covariance does not depend on $\bm y$, Corollary~\ref{coro:D1_gaussian_constant} yields
\begin{equation}\label{eq:D1_gaussian_general_constant}
D_1(k)=
G_d\,k^{-2/d}\,(2\pi)\,\sigma_{X|Y}^2
\left(\frac{d+2}{d}\right)^{\!\frac{d+2}{2}}
+o(k^{-2/d}).
\end{equation}

\begin{remark}[Translation structure of the Gaussian posterior]
Because $\bm \Sigma_{\bm X|\bm Y}$ is constant in $\bm y$ and $\bm \mu_{\bm X|\bm Y=\bm y}$ is affine in $\bm y$, the posterior at any observation $\bm y$ is a translation of the zero-mean posterior at $\bm y=\bm 0$. Consequently, an optimal or approximately optimal posterior codebook for $\bm y$ can be obtained by translating the corresponding codebook for $\bm y=\bm 0$ by $\bm \mu_{\bm X|\bm Y=\bm y}$. This property is exploited in Section~\ref{sec:simres} to reduce the computational cost of the $k$-means approximation.
\end{remark}

Therefore, the additive Gaussian model lies squarely in the smooth-density regime of Section~\ref{subsec:d-dim-rate-comparison}: the centralized distortion obeys the exact $k^{-2/d}$ law in \eqref{eq:D1_gaussian_general_constant}, while the decentralized MMSE benchmark is lower-bounded by a term of order $k^{-2/d}$ and hence cannot improve upon the centralized exponent.

\subsection{Scalar Gaussian specialization ($d=1$)}\label{sec-G1}

For $d=1$, the general formulas above simplify to fully closed-form scalar expressions, because $G_1=1/12$. Since $G_1=1/12$ and $V_1=2$, \eqref{eq:D1_gaussian_general_constant} becomes
\[
D_1(k)=\frac{\sqrt{3}\pi}{2}\,\sigma_{X|Y}^2\,\frac{1}{k^2}+o(k^{-2}),
\]
where
\[
\sigma_{X|Y}^2=\frac{\sigma_X^2\sigma_N^2}{\sigma_X^2+\sigma_N^2}.
\]
On the decentralized side, \eqref{eq:D2_gaussian_general_lower} yields
\begin{equation}\label{eq_dv2}
D_2(k)\ge \e^{-2}\left(\frac{1}{(2M)\left(1+\frac{k}{2}\right)}\right)^2
=\frac{2\pi\sigma_G^2}{\e^2(k+2)^2},
\end{equation}
where
\[
M=(2\pi\sigma_G^2)^{-1/2},
\qquad
\sigma_G^2=\frac{\sigma_X^4\sigma_N^2}{(\sigma_X^2+\sigma_N^2)^2}.
\]
Thus, in the scalar Gaussian model, the centralized best-candidate distortion decays as $k^{-2}$, while the decentralized MMSE benchmark is lower-bounded by a term of order $k^{-2}$. In particular, the benchmark cannot have a better $k$-exponent than the centralized estimator.


\section{Numerical Validation}\label{sec:simres}

This section numerically validates the exponent-level predictions developed in Sections~\ref{sec:main_results} and \ref{sec:gaussian_examples}. We estimate the centralized best-candidate distortion $D_1(k)$ and the decentralized MMSE benchmark distortion $D_2(k)$ in the isotropic additive Gaussian model of Section~\ref{sec-pred}, compare $D_2(k)$ with the theoretical lower bound in \eqref{eq:D2_gaussian_general_lower}, and compare the empirical centralized curve $D_1(k)$  with the high-rate prediction in \eqref{eq:D1_gaussian_general_constant}. The emphasis is on scaling with $k$, rather than on exact constant matching over a finite range of list sizes.

\subsection{Simulation setup and reproducibility details}

We consider the additive Gaussian model
\[
\bm X\sim\mathcal{N}(\bm 0,\mathbf{I}),\qquad
\bm Y_i=\bm X+\bm N_i,
\]
where $\bm N_i\sim\mathcal{N}(\bm 0,\sigma_N^2\mathbf{I})$ independently across $i$ and independently of $\bm X$. We study noise levels $\sigma_N\in\{0.2,1,5\}$ and dimensions $d\in\{1,4,10\}$ over the plotted range $1\le k\le 10^3$.

For each configuration $(d,\sigma_N,k)$, the decentralized MMSE benchmark distortion $D_2(k)$ is estimated by Monte Carlo from \eqref{eq:D2_def}, using the linear MMSE estimator in \eqref{eq_2b}. The results reported here use $10^5$ Monte Carlo trials for the benchmark curves. The corresponding theoretical lower bound is computed from \eqref{eq:D2_gaussian_general_lower}.

The centralized distortion $D_1(k)$ is approximated numerically by sampled $k$-means.\footnote{Computing the globally optimal solution of \eqref{eq:D1_def} is the optimal $k$-point quantization problem, which is NP-hard in general. We therefore use $k$-means as a numerical approximation.} The experiments are implemented in Python using the FAISS library with GPU-accelerated $k$-means \cite{Johnson_2019_1}.  

\begin{remark}[Approximating $G_d$ in the high-rate reference curve]
The centralized high-rate prediction in \eqref{eq:D1_gaussian_general_constant} depends on the constant $G_d$. For $d=1$, this value is exact: $G_1=1/12$. For $d=4$ and $d=10$, where no closed-form expression is available, we use the best reported lattice normalized second moments (NSMs) from \cite[Table~I]{AgrellAllen2023}. Specifically, if
\[
G(\Lambda)\triangleq \frac{1}{d\,\mathrm{Vol}(V)^{1+2/d}}\int_V \|\bm z\|_2^2\,d\bm z
\]
denotes the NSM of a lattice $\Lambda$ with Voronoi cell $V$, then the distortion constant in our notation is approximated by
\[
G_d \approx d\,G(\Lambda_d^{\rm best}).
\]

Accordingly, in the plotted high-rate reference curves we use
\[
G_1=\frac{1}{12},\qquad
G_4\approx 4\times 0.076603235 = 0.30641294,\qquad
G_{10}\approx 10\times 0.070813818 = 0.70813818,
\]
where $0.076603235$ and $0.070813818$ are the Table~I NSM values for $D_4$ and $D_{10}^{+}$, respectively \cite{AgrellAllen2023}. As an additional asymptotic check, one may also compare with the large-$d$ proxy $G_d\approx d(2\pi e)^{-1}$ obtained by combining the NSM asymptotic $G(\Lambda)\approx(2\pi e)^{-1}$ with the above conversion \cite{ZamirFeder1996,AgrellAllen2023}.
\end{remark}

\subsection{Centralized $k$-means approximation and Gaussian shift structure}

In the Gaussian model, the posterior $\bm X|\bm Y=\bm y$ is
\[
\mathcal{N}\!\left(\bm \mu_{\bm X|\bm Y=\bm y},\,\sigma_{X|Y}^2\mathbf{I}\right),
\qquad
\bm \mu_{\bm X|\bm Y=\bm y}
=
\frac{\sigma_X^2}{\sigma_X^2+\sigma_N^2}\bm y,
\qquad
\sigma_{X|Y}^2
=
\frac{\sigma_X^2\sigma_N^2}{\sigma_X^2+\sigma_N^2}.
\]
Because the posterior covariance is independent of $\bm y$, the entire posterior family is obtained by translating a fixed zero-mean Gaussian. We therefore approximate the posterior-optimal codebook only once, for the zero-mean posterior, by running $k$-means on samples from $\mathcal{N}(\bm 0,\sigma_{X|Y}^2\mathbf{I})$. For a realized observation $\bm y$, the corresponding codebook is obtained by translation:
\[
\bm x_i(\bm y)=\bm \mu_{\bm X|\bm Y=\bm y}+\bm x_i(\bm 0),\qquad i\in[1:k].
\]
Thus, the Gaussian shift structure is used exactly, while the only approximation comes from the sampled $k$-means solution itself.

The resulting empirical estimate of $D_1(k)$ is obtained by evaluating the min-of-$k$ squared error of the translated codebook under the posterior distribution. The high-rate reference curve for $D_1(k)$ is computed from \eqref{eq:D1_gaussian_general_constant}, using $G_1=1/12$ when $d=1$ and the corrected lattice-based proxies $G_4\approx 0.30641294$ and $G_{10}\approx 0.70813818$ when $d\in\{4,10\}$.

\subsection{Results and discussion}

\begin{figure}[h!]
 \centering
  \includegraphics[width=1\linewidth]{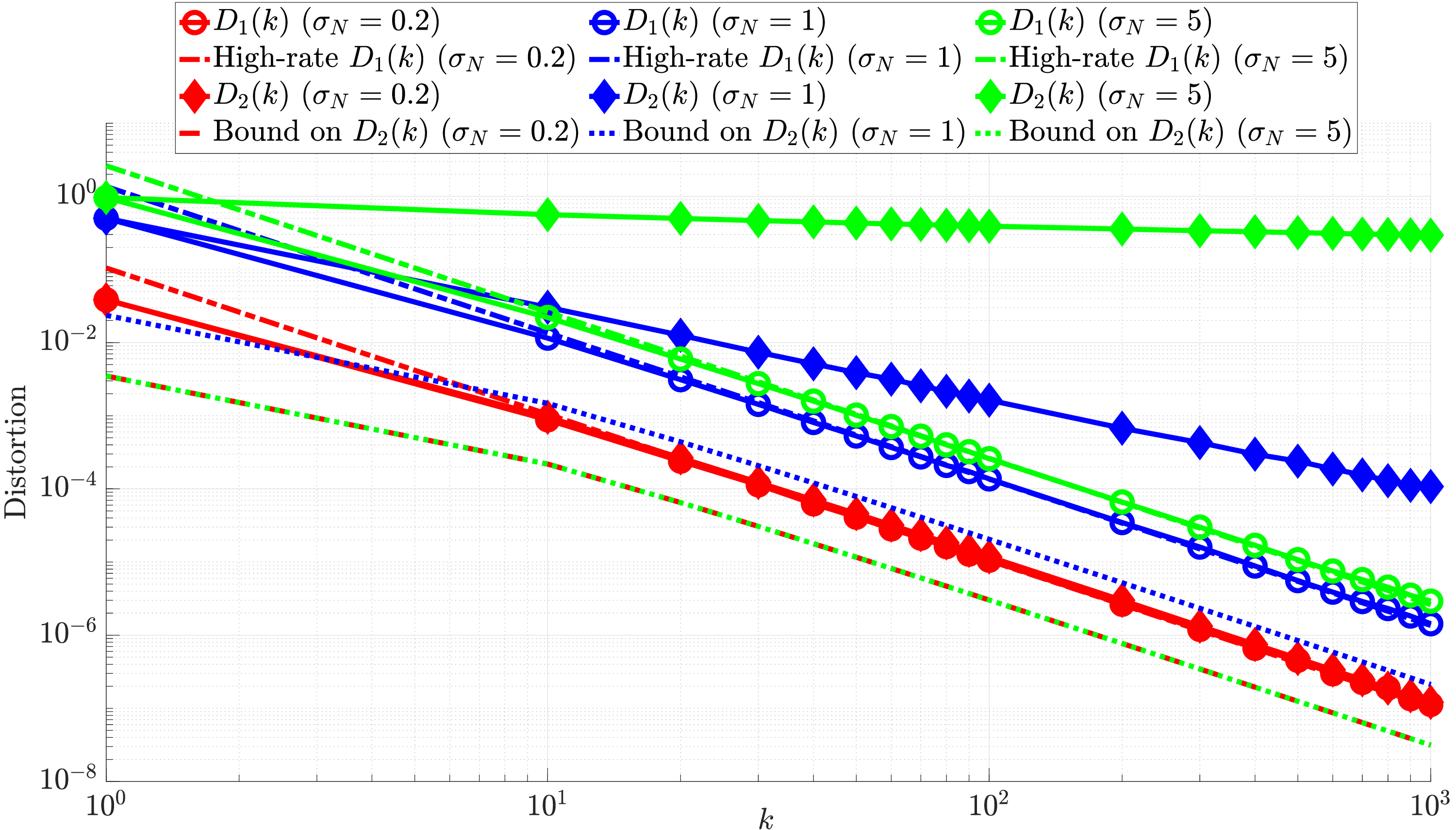}
  \caption{Empirical estimates of the centralized distortion $D_1(k)$ and the decentralized MMSE benchmark distortion $D_2(k)$, together with the theoretical lower bound on $D_2(k)$ and the high-rate prediction for $D_1(k)$, for $d=1$, $\sigma_X=1$, and $\sigma_N\in\{0.2,1,5\}$.}
  \label{fig_1}
\end{figure}

\begin{figure}[h!]
 \centering
  \includegraphics[width=1\linewidth]{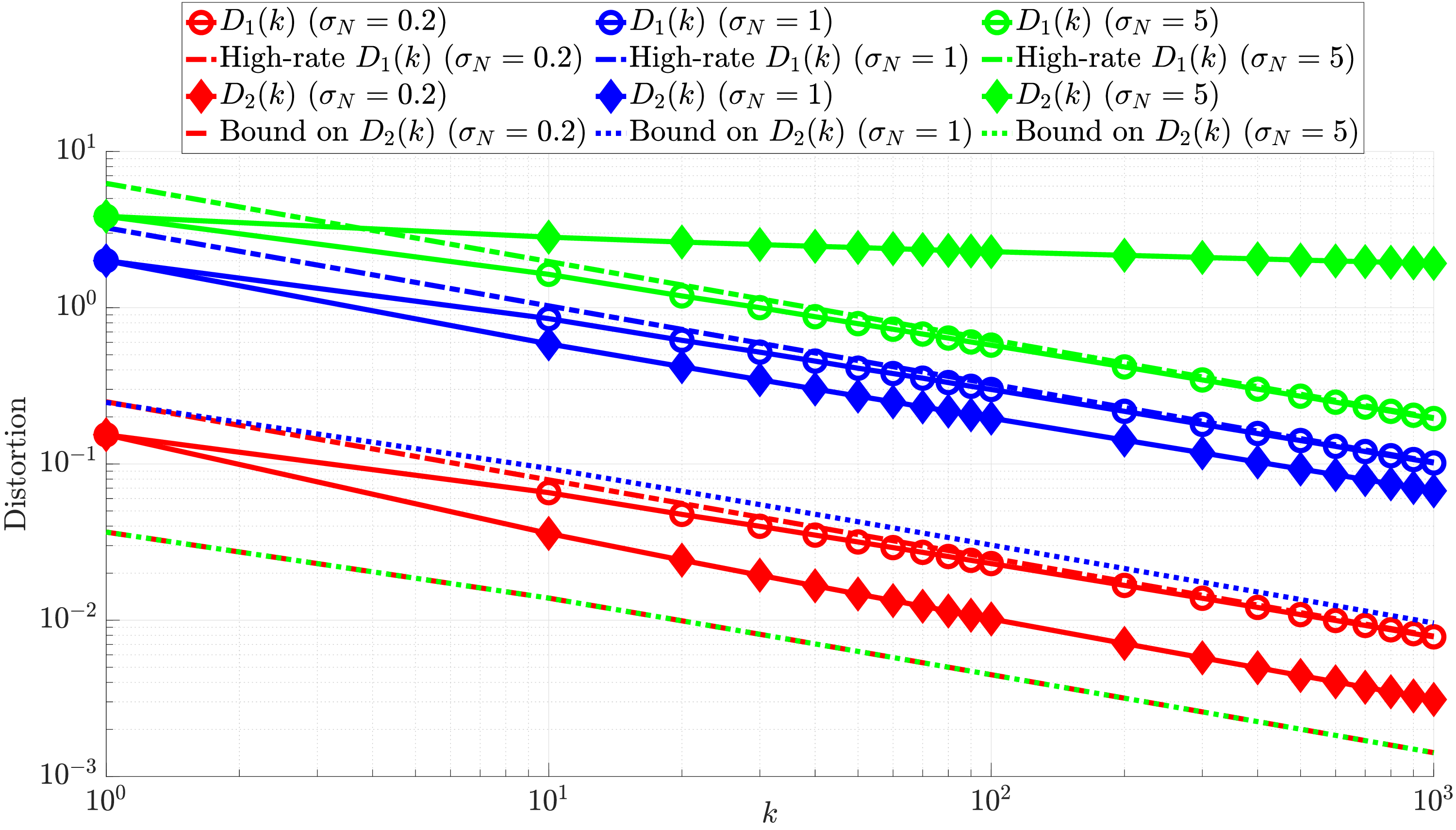}
  \caption{Empirical estimates of the centralized distortion $D_1(k)$ and the decentralized MMSE benchmark distortion $D_2(k)$, together with the theoretical lower bound on $D_2(k)$ and the high-rate prediction for $D_1(k)$, for $d=4$, $\sigma_X=1$, and $\sigma_N\in\{0.2,1,5\}$.}
  \label{fig_2}
\end{figure}

\begin{figure}[h!]
 \centering
  \includegraphics[width=1\linewidth]{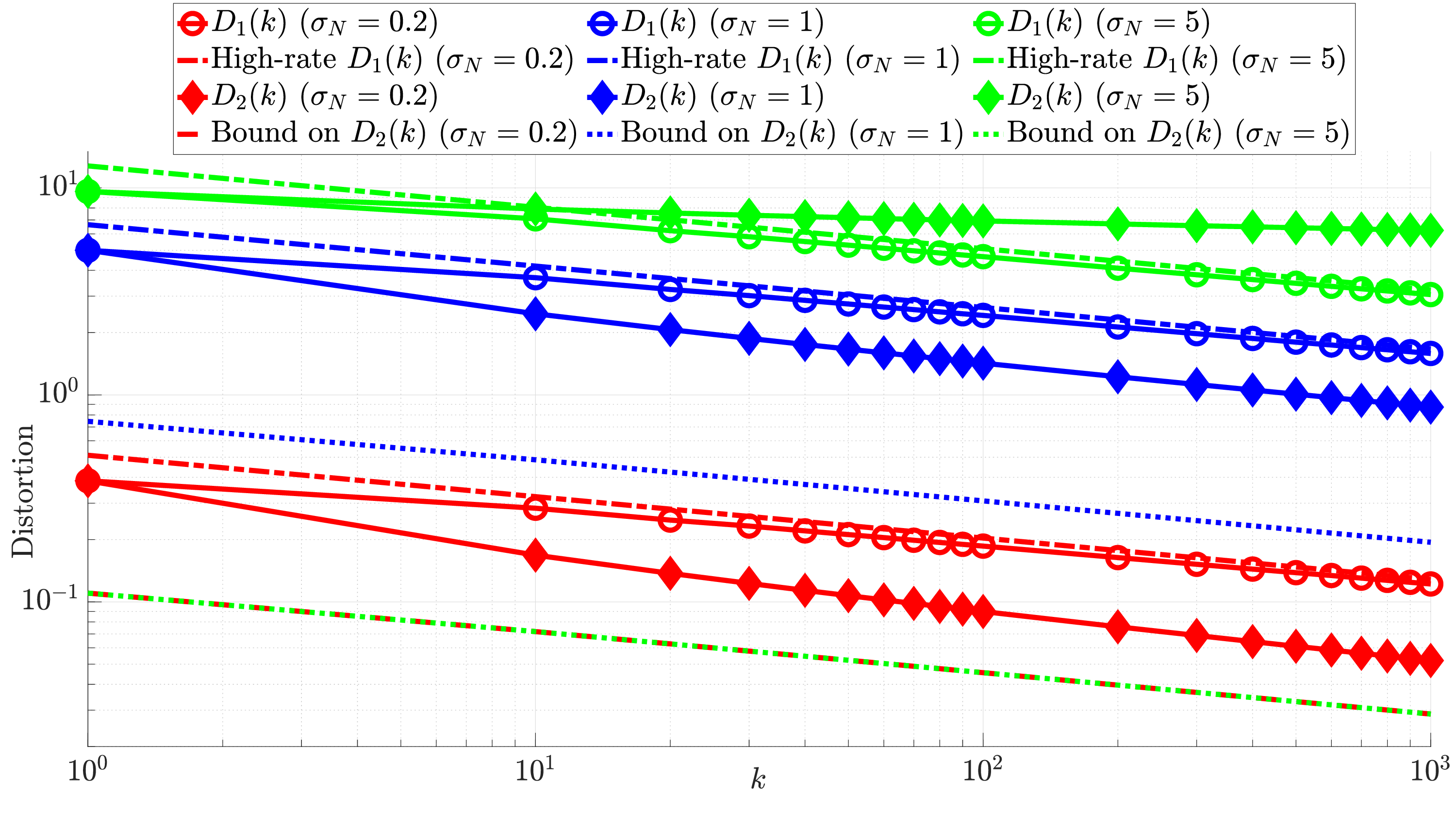}
  \caption{Empirical estimates of the centralized distortion $D_1(k)$ and the decentralized MMSE benchmark distortion $D_2(k)$, together with the theoretical lower bound on $D_2(k)$ and the high-rate prediction for $D_1(k)$, for $d=10$, $\sigma_X=1$, and $\sigma_N\in\{0.2,1,5\}$.}
  \label{fig_3}
\end{figure}

Figures~\ref{fig_1}--\ref{fig_3} confirm the qualitative predictions of the theory. The empirical centralized curves exhibit approximately linear behavior on the log-log scale in the larger-$k$ regime, with slopes consistent with the predicted exponent $-2/d$. The decay is steepest for $d=1$ and becomes progressively shallower for $d=4$ and $d=10$, exactly as dictated by the high-rate law $k^{-2/d}$. The overlaid high-rate curves provide a useful asymptotic reference and track the empirical centralized distortion $D_1(k)$ increasingly well as $k$ grows.

The decentralized MMSE benchmark $D_2(k)$ remains above the theoretical lower bound in all tested settings, as required. Across the three dimensions, the lower bound captures the exponent trend more faithfully than the constant: it has the correct log-log slope dictated by Section~\ref{sec:main_results}, but the gap between the empirical benchmark and the bound can be substantial, especially at lower SNR and in higher dimension. This is expected, since \eqref{eq:D2_gaussian_general_lower} is a converse-style result rather than an exact asymptotic characterization.

The figures should be interpreted as validating exponent-level scaling, not as establishing uniform absolute dominance of one method over the other. The decentralized MMSE benchmark uses $k$ independent observations, whereas the centralized $k$-list estimator uses only one. Consequently, for moderate $k$ and favorable noise levels, the absolute value of the empirical benchmark distortion can be smaller than that of the centralized estimator. The theoretical point of the paper is instead that, despite this informational disadvantage, one observation plus $k$ candidates can match the benchmark at the exponent level in the smooth Gaussian regime.

\begin{remark}[Why the bounds for $\sigma_N=0.2$ and $\sigma_N=5$ coincide]
With $\sigma_X=1$, the quantity governing the Gaussian lower bound,
\[
\sigma_G^2=\frac{\sigma_X^4\sigma_N^2}{(\sigma_X^2+\sigma_N^2)^2}
=\frac{\sigma_N^2}{(1+\sigma_N^2)^2},
\]
takes the same value at $\sigma_N=0.2$ and $\sigma_N=5$. Hence the theoretical lower bound in \eqref{eq:D2_gaussian_general_lower} is identical for these two noise levels. The corresponding empirical benchmark distortions, however, differ substantially, illustrating again that the lower bound captures the exponent reliably but need not provide a tight constant.
\end{remark}

Overall, the simulations support the main theoretical message of the paper: the centralized $k$-list estimator exhibits the predicted $k^{-2/d}$ scaling, and the decentralized MMSE benchmark behaves consistently with the converse bounds derived from the small-ball analysis.


\section{Conclusion}\label{sec_conc}

This paper studied $k$-list estimation, in which a single observation is used to generate $k$ candidate estimates and performance is measured by the squared error of the best candidate in the list. The main conclusion is that, under standard regularity conditions, one observation plus a well-designed list of $k$ candidates can match the $k$-exponent of a natural decentralized MMSE benchmark that uses $k$ conditionally i.i.d.\ observations. More precisely, the centralized best-candidate distortion admits an exact high-rate asymptotic of order $k^{-2/d}$, while the decentralized MMSE benchmark satisfies converse-style lower bounds determined by the small-ball behavior of the single-agent MMSE error. In the smooth-density regime, the benchmark cannot decay faster than order $k^{-2/d}$; when the conditional error density vanishes near the origin, the benchmark converse exponent is strictly worse. For Gaussian models, the centralized high-rate coefficient becomes explicit (and fully closed-form in the scalar case), and the numerical results support the predicted exponent-level behavior.

Several directions remain open. The most natural is to characterize the \emph{optimal} decentralized architecture for the min-of-$k$ objective, beyond the symmetric MMSE benchmark studied here. It would also be of interest to extend the analysis to mixed discrete--continuous, singular, or otherwise non-smooth error laws, and to develop sharper finite-$k$ bounds and constant-level comparisons between centralized and decentralized designs.

\appendices

\section{Proof of Theorem~\ref{th_2}}\label{app-new_2}

This appendix proves Theorem~\ref{th_2}. The key observation is that conditional independence across agents and Jensen's inequality reduce the argument to the averaged small-ball probability of the single-agent MMSE error.

Recall the MMSE estimator
\[
g(\bm y)=\mathbb{E}[\bm X\mid \bm Y=\bm y],
\]
and the quantities
\[
\bm Z_i=\bm X-g(\bm Y_i),\qquad
W_i=\|\bm Z_i\|_2^2,\qquad
\Lambda_k=\min_{1\le i\le k}W_i.
\]
Then $D_2(k)=\mathbb{E}[\Lambda_k]$.

Fix $\bm x\in\mathbb{R}^d$. By the conditional-i.i.d.\ factorization \eqref{eq:agent_cond_iid_factorization}, the observations $\bm Y_1,\ldots,\bm Y_k$ are i.i.d.\ conditioned on $\bm X=\bm x$. Since $g(\cdot)$ is deterministic, the random variables $W_1,\ldots,W_k$ are also i.i.d.\ conditioned on $\bm X=\bm x$.

Now fix $a>0$. Since $\Lambda_k\ge 0$,
\[
\Lambda_k \ge a\,\mathds{1}\{\Lambda_k>a\},
\]
and therefore
\begin{equation}\label{eq:trunc_bound}
\mathbb{E}[\Lambda_k\mid \bm X=\bm x]
\ge
a\,\mathbb{P}(\Lambda_k>a\mid \bm X=\bm x).
\end{equation}
Let
\[
F_{\bm x}(a)\triangleq \mathbb{P}(W_1\le a\mid \bm X=\bm x).
\]
Using conditional independence,
\[
\mathbb{P}(\Lambda_k>a\mid \bm X=\bm x)
=
\mathbb{P}\!\left(\bigcap_{i=1}^k \{W_i>a\}\,\middle|\,\bm X=\bm x\right)
=
\prod_{i=1}^k \mathbb{P}(W_i>a\mid \bm X=\bm x)
=
\big(1-F_{\bm x}(a)\big)^k.
\]
Substituting into \eqref{eq:trunc_bound} gives
\begin{equation}\label{eq:D2_cond_lb}
\mathbb{E}[\Lambda_k\mid \bm X=\bm x]
\ge
a\,\big(1-F_{\bm x}(a)\big)^k.
\end{equation}

Averaging \eqref{eq:D2_cond_lb} over $\bm X$ yields
\[
D_2(k)
=
\mathbb{E}[\Lambda_k]
\ge
a\,\mathbb{E}_{\bm X}\!\left[\big(1-F_{\bm X}(a)\big)^k\right].
\]
Define $\phi(u)=(1-u)^k$ for $u\in[0,1]$. Since $\phi$ is convex on $[0,1]$ for every integer $k\ge 1$, Jensen's inequality gives
\[
\mathbb{E}_{\bm X}\!\left[\big(1-F_{\bm X}(a)\big)^k\right]
=
\mathbb{E}_{\bm X}[\phi(F_{\bm X}(a))]
\ge
\phi\!\left(\mathbb{E}_{\bm X}[F_{\bm X}(a)]\right).
\]
The key point is that only the \emph{averaged} small-ball probability appears:
\[
\mathbb{E}_{\bm X}[F_{\bm X}(a)]
=
\mathbb{P}(W\le a).
\]
Hence, under the averaged small-ball condition \eqref{eq:ball_assumption}, for any $a\in[0,a_0]$ satisfying $Ca^\alpha\le 1$,
\begin{equation}\label{eq:D2_lb_a}
D_2(k)
\ge
a\,\big(1-Ca^\alpha\big)^k.
\end{equation}

We now optimize the explicit lower bound in \eqref{eq:D2_lb_a}. Consider
\[
L(a)\triangleq a(1-Ca^\alpha)^k,
\qquad
0\le a\le \min\{a_0,C^{-1/\alpha}\}.
\]
A direct derivative calculation gives
\[
L'(a)
=
(1-Ca^\alpha)^{k-1}\Big(1-(1+\alpha k)Ca^\alpha\Big).
\]
Therefore, $L(a)$ is maximized at the stationary point
\[
a^\star
=
\left(\frac{1}{C(1+\alpha k)}\right)^{1/\alpha}.
\]
By assumption in the theorem statement, $a^\star\le a_0$, and clearly
\[
Ca^{\star\,\alpha}=\frac{1}{1+\alpha k}<1,
\]
so $a^\star$ is admissible in \eqref{eq:D2_lb_a}. Substituting $a=a^\star$ into \eqref{eq:D2_lb_a} yields
\[
D_2(k)
\ge
\left(\frac{1}{C(1+\alpha k)}\right)^{1/\alpha}
\left(1-\frac{1}{1+\alpha k}\right)^k.
\]
To bound the second factor, use
\[
\log(1-t)\ge -\frac{t}{1-t},
\qquad 0\le t<1,
\]
with $t=\frac{1}{1+\alpha k}$. Then
\[
\left(1-\frac{1}{1+\alpha k}\right)^k
=
\exp\!\left(
k\log\!\left(1-\frac{1}{1+\alpha k}\right)
\right)
\ge
\exp\!\left(
-\frac{k\cdot \frac{1}{1+\alpha k}}{1-\frac{1}{1+\alpha k}}
\right)
=
\e^{-1/\alpha}.
\]
Consequently,
\[
D_2(k)
\ge
\e^{-1/\alpha}
\left(\frac{1}{C(1+\alpha k)}\right)^{1/\alpha},
\]
which is exactly \eqref{eq:D2_small_ball_bound}. In particular, as $k\to\infty$,
\[
D_2(k)=\Omega\!\big(k^{-1/\alpha}\big).
\]

\section{Small-Ball Bounds under Local Power-Law Density Conditions}\label{app_2}

This appendix converts local power-law bounds on the conditional density of the MMSE error into small-ball probability bounds. In the main text, only the upper-bound direction is needed in Case~2 of Section~\ref{subsec:d-dim-rate-comparison}; the lower-bound direction is included to show how matching exponents follow under two-sided density bounds.

\begin{lem}[Local power-law density bound implies small-ball bound]
\label{lem:powerlaw_smallball}
Assume that conditioned on $\bm X=\bm x$, the error $\bm Z$ admits a Lebesgue density $f_{\bm Z|\bm X}(\cdot|\bm x)$ on $\mathbb{R}^d$. Fix $\beta>-d$ and $r>0$.

Suppose first that there exists a nonnegative function $c_{\max}(\bm x)$ such that, for all $\bm x$ and all $\bm z$ with $\|\bm z\|_2\le r$,
\begin{equation}\label{eq:powerlaw_upper}
f_{\bm Z|\bm X}(\bm z|\bm x)
\le
c_{\max}(\bm x)\,\|\bm z\|_2^{\beta}.
\end{equation}
Then, for all $a\in[0,r^2]$ and all $\bm x$,
\begin{equation}\label{eq:smallball_powerlaw_upper}
\mathbb{P}\!\left(\|\bm Z\|_2^2\le a \,\middle|\,\bm X=\bm x\right)
\le
\frac{S_d}{d+\beta}\,c_{\max}(\bm x)\,a^{(d+\beta)/2},
\end{equation}
where $S_d=dV_d$ is the surface area of the unit sphere in $\mathbb{R}^d$.

Moreover, if there also exists a nonnegative function $c_{\min}(\bm x)$ such that, for all $\bm x$ and all $\bm z$ with $\|\bm z\|_2\le r$,
\begin{equation}\label{eq:powerlaw_lower}
c_{\min}(\bm x)\,\|\bm z\|_2^{\beta}
\le
f_{\bm Z|\bm X}(\bm z|\bm x),
\end{equation}
then for all $a\in[0,r^2]$ and all $\bm x$,
\begin{equation}\label{eq:smallball_powerlaw_twosided}
\frac{S_d}{d+\beta}\,c_{\min}(\bm x)\,a^{(d+\beta)/2}
\le
\mathbb{P}\!\left(\|\bm Z\|_2^2\le a \,\middle|\,\bm X=\bm x\right)
\le
\frac{S_d}{d+\beta}\,c_{\max}(\bm x)\,a^{(d+\beta)/2}.
\end{equation}
Consequently, whenever $0<c_{\min}(\bm x)\le c_{\max}(\bm x)<\infty$, the conditional small-ball probability is of order $a^{(d+\beta)/2}$ as $a\downarrow 0$.
\end{lem}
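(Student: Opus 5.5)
The plan is to write the small-ball probability as an integral of the conditional density over the Euclidean ball of radius $\sqrt a$, and then evaluate the resulting radial power-law integral in polar coordinates. Fix $\bm x$ and $a\in[0,r^2]$. Since $\sqrt a\le r$, the ball $B_d(\sqrt a)=\{\bm z:\|\bm z\|_2\le\sqrt a\}$ lies inside the region $\{\|\bm z\|_2\le r\}$ where \eqref{eq:powerlaw_upper} (and, for the second part, \eqref{eq:powerlaw_lower}) holds. We start from
\[
\mathbb{P}\!\left(\|\bm Z\|_2^2\le a \,\middle|\,\bm X=\bm x\right)=\int_{B_d(\sqrt a)} f_{\bm Z|\bm X}(\bm z|\bm x)\,\D\bm z .
\]
We then bound the integrand above by $c_{\max}(\bm x)\|\bm z\|_2^\beta$, and below by $c_{\min}(\bm x)\|\bm z\|_2^\beta$ when \eqref{eq:powerlaw_lower} is assumed. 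Monotonicity of the integral for nonnegative integrands reduces both directions to computing the single quantity $\int_{B_d(\sqrt a)}\|\bm z\|_2^\beta\,\D\bm z$.

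This integral is evaluated in spherical coordinates $\bm z=\rho\bm u$, with $\rho\ge0$ and $\bm u$ on the unit sphere, so that $\D\bm z=\rho^{d-1}\D\rho\,\D\sigma(\bm u)$ and the total surface measure equals $S_d=dV_d$. The integral factorizes as
\[
S_d\int_0^{\sqrt a}\rho^{\beta+d-1}\,\D\rho=\frac{S_d}{d+\beta}\,a^{(d+\beta)/2}.
\]
This is where the hypothesis $\beta>-d$ is used: it ensures $\beta+d-1>-1$, so the integral converges at $\rho=0$ (the integrand is integrable even for negative $\beta$) and the antiderivative vanishes there. Combining this with the pointwise density bounds gives \eqref{eq:smallball_powerlaw_upper} and the two-sided bound \eqref{eq:smallball_powerlaw_twosided}. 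The final order statement then follows at once: when $0<c_{\min}(\bm x)\le c_{\max}(\bm x)<\infty$, the probability is sandwiched between two positive constant multiples of $a^{(d+\beta)/2}$.

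No step is a real obstacle. The only points that need care are:
\begin{itemize}
\item checking $\sqrt a\le r$, so that the local density bounds apply on the whole integration region;
\item justifying the polar-coordinate formula for an integrand that may be unbounded at the origin when $-d<\beta<0$. Tonelli's theorem handles this, since the integrand is nonnegative.
\end{itemize}
Everything holds pointwise in $\bm x$, so no measurability or averaging over $\bm X$ is needed in this lemma. Averaging is done afterwards in Case~2, using $\mathbb{E}[c_{\max}(\bm X)]<\infty$.
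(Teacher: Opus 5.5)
Your proposal is correct and matches the paper's proof essentially step for step: write the probability as the integral of the density over the ball of radius $\sqrt a\le r$, apply the pointwise power-law bounds, and evaluate $S_d\int_0^{\sqrt a}\rho^{d+\beta-1}\,\D\rho=\frac{S_d}{d+\beta}a^{(d+\beta)/2}$ in polar coordinates using $\beta>-d$. Your remarks on checking $\sqrt a\le r$ and on using Tonelli when $-d<\beta<0$ only make explicit what the paper uses implicitly.
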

\begin{IEEEproof}
Fix $\bm x$ and $a\in[0,r^2]$. By polar coordinates,
\[
\mathbb{P}(\|\bm Z\|_2^2\le a\mid \bm X=\bm x)
=
\int_{\|\bm z\|_2\le \sqrt a} f_{\bm Z|\bm X}(\bm z|\bm x)\,\D\bm z
=
\int_0^{\sqrt a}\int_{\mathbb{S}^{d-1}}
f_{\bm Z|\bm X}(\rho \bm u|\bm x)\,\rho^{d-1}\,\D\bm u\,\D\rho,
\]
where $\mathbb{S}^{d-1}$ is the unit sphere and
\[
\int_{\mathbb{S}^{d-1}}\D\bm u=S_d=dV_d.
\]

If \eqref{eq:powerlaw_upper} holds, then for $\rho\le r$,
\[
f_{\bm Z|\bm X}(\rho\bm u|\bm x)\le c_{\max}(\bm x)\rho^\beta.
\]
Therefore,
\[
\mathbb{P}(\|\bm Z\|_2^2\le a\mid \bm X=\bm x)
\le
c_{\max}(\bm x)
\int_0^{\sqrt a}\int_{\mathbb{S}^{d-1}}
\rho^\beta \rho^{d-1}\,\D\bm u\,\D\rho
=
c_{\max}(\bm x)\,S_d
\int_0^{\sqrt a}\rho^{d+\beta-1}\,\D\rho.
\]
Since $\beta>-d$,
\[
\int_0^{\sqrt a}\rho^{d+\beta-1}\,\D\rho
=
\frac{(\sqrt a)^{d+\beta}}{d+\beta}
=
\frac{a^{(d+\beta)/2}}{d+\beta},
\]
which proves \eqref{eq:smallball_powerlaw_upper}.

If, in addition, \eqref{eq:powerlaw_lower} holds, then the same calculation with the lower bound gives
\[
\mathbb{P}(\|\bm Z\|_2^2\le a\mid \bm X=\bm x)
\ge
c_{\min}(\bm x)\,S_d
\int_0^{\sqrt a}\rho^{d+\beta-1}\,\D\rho
=
\frac{S_d}{d+\beta}\,c_{\min}(\bm x)\,a^{(d+\beta)/2},
\]
which completes the proof of \eqref{eq:smallball_powerlaw_twosided}.
\end{IEEEproof}

\bibliographystyle{IEEEtran}
\bibliography{Citations}

\end{document}